\newcommand\e\emph\renewcommand\d{\ensuremath{\Delta_r}\xspace}\renewcommand\L{\ensuremath{L}\xspace}\newcommand\p{\ensuremath{p}\xspace}\renewcommand\P{\ensuremath{P}\xspace}\newcommand\s{\ensuremath{s}\xspace}\renewcommand\t{\ensuremath{t}\xspace}\newcommand\V[1]{\ensuremath{V}({#1})}\renewcommand\sp{{\sc Secluded Path}\xspace}\newcommand\eps{\ensuremath\varepsilon\xspace}
\title{Geometric secluded paths and planar satisfiability}
\author{Kevin Buchin}{Department of Mathematics and Computer Science, TU Eindhoven, the Netherlands}{k.a.buchin@tue.nl}{}{}
\author{Valentin Polishchuk and Leonid Sedov}{Communications and Transport Systems, ITN, Link\"oping University, Sweden}{firstname.lastname@liu.se}{}{}
\author{Roman Voronov}{Institute of Mathematics and Information Technologies, Petrozavodsk State University, Russia}{rvoronov@petrsu.ru}{}{}
\authorrunning{K. Buchin, V. Polishchuk, L. Sedov, R. Voronov}
\keywords{Visibility, Route planning, Security/privacy, Planar satisfiability}
\begin{document}\maketitle
\begin{abstract}We consider paths with low \emph{exposure} to a 2D polygonal domain, i.e., paths which are seen as little as possible; we differentiate between \emph{integral} exposure (when we care about how long the path sees every point of the domain) and \emph{0/1} exposure (just counting whether a point is seen by the path or not). For the integral exposure, we give a PTAS for finding the minimum-exposure path between two given points in the domain; for the 0/1 version, we prove that in a simple polygon the shortest path has the minimum exposure, while in domains with holes the problem becomes NP-hard. We also highlight connections of the problem to minimum satisfiability and settle hardness of variants of planar min- and max-SAT.
\end{abstract}
\begin{table}\centering\begin{minipage}[b]{.4\columnwidth}\centering\begin{tabular}{|c|c|c|}\hline
& $0/1$ & Integral \\
&exposure&exposure\\\hline
With holes & Hard (Thm.~\ref{tmin})  & PTAS \\\cline{1-2}
Simple &P (Thm.~\ref{tsimple}) & (Thm.~\ref{t:ptas})\\\hline\end{tabular}\caption{Hardness of minimum-exposure paths in polygonal environments}\label{results:secluded}\end{minipage}\hfill
\begin{minipage}[b]{.5\columnwidth}\centering\begin{tabular}{|c|c|c|}\hline
& min2SAT & max2SAT \\\hline
V-cycle & Hard (Thm.~\ref{t:opt2sat}) & \\\cline{1-2}
VC-cycle & Hard (Thm.~\ref{t:opt2sat}) & Hard\\\cline{1-2}
Separable & P (Thm.~\ref{t:separable:min}) & (Thm.~\ref{t:opt2sat})\\\cline{1-2}
Monotone & P (Cor.~\ref{c:monotone:min}) & \\\hline
\end{tabular}\caption{Hardness of versions of planar opt2SAT (see Section~\ref{2sat} for definitions)}\label{results:opt2sat}\end{minipage}\end{table}
\section{Introduction and Related work}
Both visibility and motion planning are textbook subjects in computational geometry – see, e.g., the respective chapters in the handbook \cite{handbook} and the books \cite{agtBook,ghoshBook}. Visibility meets routing in a variety of geometric computing tasks. Historically, the first approach to finding shortest paths was based on searching the visibility graph of the domain; visibility is vital also in computing \e{minimum-link} paths, i.e., paths with fewest edges \cite{mrw,suri,minlink,minlink3d}. ''Visibility-driven'' path planning has attracted also some recent interest \cite{qvm,Haitao,gender}. In addition to the theoretical considerations, visibility and motion planning are closely coupled in practice: computer vision and robot navigation go hand-in-hand in many courses and real-world applications.

The question of \emph{hiding} a path in a polygonal domain was first raised in a SoCG'88 paper \cite{01socg}: it considered the \emph{robber route problem} in which the goal is to minimize the length traveled within sight of at least one of a number of \emph{threats} (each threat being a point); the problem reduces to finding the shortest path in the $0/1/\infty$ metric that assigns a cost of 1 to the union of the visibility polygons of the threats, and 0 to the rest of the domain (and infinite weight to the complement of the domain, where travel is forbidden). Our settings are different from \cite{01socg} in two aspects: (1)~we have a \e{continuum} of the threats (every point in the domain is a threat) and (2)~in the integral version, we care for how long threats are seen from points along the path (formally: we integrate the visible area along the path); in other words, we account for the ``intensity'' of the visibility from the threats.

Lately, motivated by the rise of the Internet of things (IoT) and mobile computing, there has been a surge of research on anonymity, security, confidentiality and other forms of privacy preservation (in particular, in geometric environments~\cite{mobihoc}), studying paths with minimum exposure to sensors in a network~\cite{sensor,sensor1,sensor2,sensor3}. The standard model, again, assumes a finite number of point sensors, so the visibility is changing discretely, as the path goes in/out of a sensor coverage. To our knowledge, Lebeck, M{\o}lhave and Agarwal~\cite{occluded1,occluded2} were the first to introduce integration of the visibility \e{continuously} changing along the path (which is also one of our models). 
Our paper is different from Lebeck et al.\ in that we give algorithms with provable theoretical performance in continuous domains under the usual notion of distance-independent visibility. Lebeck et al.\ presented strategies with outstanding practical performance on discretized terrains, in the more realistic model of visibility deteriorating with distance.

Minimizing the integral exposure can be viewed as an extension of the \e{weighted region problem (WRP)}~\cite{algebraic,wrp,inkulu,papadimitriou,aleksandrov,anisotropic,querying,refinement} to the case of continuously changing weight, where the weight of a point is the area of its visibility polygon; in the WRP the input is a weighted polygonal subdivision of the domain (with a constant weight assigned to each cell of the subdivision) and the goal is to find the path minimizing the integral of the weight along the path. The computational complexity of the WRP is open; PTASs for the problem have running times that depend not only on the complexity of the subdivision, but also on various parameters of the input like ratio of max/min weight, largest coordinate and angles of the regions, etc.\ (the parameters differ between the algorithms, see \cite[Ch.~31]{handbook3} for details). Integration of other measures of ``local quality'' (different from visibility) for points along a path was the subject also in the study of \e{high-quality} paths~\cite{quality1,quality2} and related research~\cite{vv,jurWAFR06}.

Recent papers~\cite{chechik,kulikov,luckow,novosibirsk} explored paths adjacent to few vertices in graphs; such paths were dubbed \e{secluded} in \cite{chechik}. Our paper may thus be viewed as studying geometric versions of the secluded path problem.

\subparagraph*{Contributions and Roadmap} In Section~\ref{smin} we prove that in a polygonal domain with holes it is NP-hard to find a path, between two given points, minimizing the area seen from the path; the reduction is from minSAT (find the truth assignment to Boolean variables so as to satisfy the minimum number of given disjunctive clauses). In Section~\ref{ssimple} we complement the hardness by showing that in a simple polygon, shortest paths are the ones that see minimum area; even more generally, we prove that in a polygon with holes, a locally shortest path sees less area than any path of the same homotopy type (because for a small number of holes the homotopy types can be efficiently enumerated, this implies that the problem is FPT parameterized by the number of holes). Section~\ref{ptas} gives a PTAS for minimizing the integral of the seen area along the path; we first give a generic scheme for building a piecewise-constant approximation of the visibility area for points in the domain, and then in Section~\ref{implementation} present details of an implementation which allows applying a PTAS for WRP on our ``pixels'' with approximately constant seen area. Finally, in Section~\ref{2sat} we further explore the connection between path hiding and minSAT, and determine hardness of versions of planar minSAT (and maxSAT).

Tables~\ref{results:secluded} and~\ref{results:opt2sat} summarize the results. We leave open designing an approximation algorithm for minimizing the seen area, as well as the complexity of the integral version of the problem.

\subparagraph*{Notation and Problems formulation} We use $|\cdot|$ to denote the measure of a set, i.e., length of a segment and area of a 2D set. Let \P be a polygonal domain with $n$ vertices and $\s,\t\in\P$ be two given points in it. For a point $p\in\P$ let $\V\p\subseteq\P$ be the visibility polygon of \p, i.e., the set of points seen by \p. We study the following problems:
\begin{itemize}
\item {\sc Geometric Secluded Path}: Find the \s-\t path that sees as little area of \P as possible (the area seen by a path is defined as the area seen by at least one point of the path, i.e., the so called \e{weak} visibility region of the path).
\item {\sc Integral Geometric Secluded Path}: Find the \s-\t path $\pi$ that minimizes the integral of the area of the visibility polygon over the points along the path, $\int_{\pi}|\V\p|\,\mathrm{d}p$.
\end{itemize}

\section{Minimizing seen area}\label{smin}
We prove that exposure minimization is NP-hard in general, but in simple polygons the minimum-exposure path is the shortest path.
\begin{theorem}\label{tmin}{\sc Geometric Secluded Path} is NP-hard.
\end{theorem}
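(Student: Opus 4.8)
The plan is to reduce from \emph{minSAT}: given a CNF formula $\phi$ with variables $x_1,\dots,x_m$, clauses $C_1,\dots,C_k$, and an integer $K$, decide whether some truth assignment satisfies \emph{at most} $K$ clauses. Since minSAT is NP-hard, it suffices to give a polynomial-time construction of a polygonal domain \P (with holes), endpoints $\s,\t\in\P$, and an area budget $A$, such that some \s-\t path sees area at most $A$ if and only if $\phi$ admits an assignment satisfying at most $K$ clauses.

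First I would build a \emph{variable chain}: the path must travel $\s\to G_1\to\cdots\to G_m\to\t$ through $m$ gadgets connected in series by narrow junctions. Each gadget $G_i$ offers two disjoint narrow corridors, a \emph{true} corridor and a \emph{false} corridor, that split off at the entry junction and rejoin at the exit junction; since these corridors are the only passages between the two junctions, every \s-\t path traverses at least one corridor of each gadget. Each corridor is made narrow enough that its interior contributes a fixed baseline area $B/m$, identical for the true and the false corridor, so that the total area seen by the corridors and junctions is a constant $B$ independent of which corridors are used.

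Second I would place the \emph{clause pockets}: $k$ pairwise-disjoint rooms $R_1,\dots,R_k$, each of unit area, sitting behind the corridor walls and reachable by line of sight only through thin slits (windows). The reduction cuts a slit from the true corridor of $x_i$ into room $R_j$ exactly when literal $x_i$ appears in clause $C_j$, and from the false corridor of $x_i$ into $R_j$ exactly when $\bar{x}_i$ appears in $C_j$; the holes of \P are arranged so that each room is visible \emph{only} through its designated slits, with no leakage across corridors or junctions. Because a slit lies along a range of longitudinal positions that any traversing path must cross, using a given corridor forces its rooms into the weak-visibility region, and conversely. As visibility is taken as a union, a room seen through several slits is counted only once, so the assignment-dependent part of the seen area equals the number of \emph{distinct} exposed rooms.

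Setting the assignment so that $x_i$ is true exactly when the path uses the true corridor of $G_i$, room $R_j$ is then exposed if and only if clause $C_j$ is satisfied, whence the seen area equals $B$ plus the number of satisfied clauses; I would put $A=B+K$. Completeness is immediate: route each gadget according to a given assignment. For soundness, note that detours -- in particular traversing both corridors of a gadget -- only enlarge the union, so any path seeing area at most $A$ exposes at most $K$ rooms, while every clause satisfied by the induced assignment has its room exposed, giving at most $K$ satisfied clauses. The main obstacle is the visibility engineering: certifying that the holes can be placed so each room is seen through precisely its slits, that corridors and junctions contribute a constant baseline regardless of the binary choices, and that rooms are mutually disjoint and disjoint from the baseline region -- i.e., that the only assignment-dependent contribution to the weak-visibility area is the set of exposed clause rooms. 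This is exactly where holes are indispensable, consistent with the simple-polygon case being tractable (Theorem~\ref{tsimple}).
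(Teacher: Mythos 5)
Your reduction skeleton is the same as the paper's: a chain of variable gadgets, each forcing a binary corridor choice, plus clause regions of fixed area that enter the weak-visibility region exactly when a chosen corridor carries one of the clause's literals, so that seen area $=$ baseline $+$ (number of satisfied clauses). The paper reduces from min2SAT and realizes the clause region as the \emph{intersection} of the two thin corridors running from the two literal positions up to a high horizontal ``clause line''; your ``rooms behind slits'' is a cosmetic variant of the same idea. The problem is that everything you defer to ``visibility engineering'' is, in fact, the entire proof, and at least one of the properties you assume cannot be achieved exactly. You posit that each room is visible \emph{only} through its designated slits ``with no leakage.'' But the sight lines (or tubes, or corridors) from literals to clause regions must cross one another in the plane whenever the variable--clause incidence structure augmented with the variable chain is non-planar --- and it must be non-planar, because the crossing-free case is precisely separable planar minSAT, which this paper proves polynomial-time solvable (Theorem~\ref{t:separable:min}). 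So a correct reduction has to confront crossings rather than assume them away; the bulk of the paper's proof is a quantitative argument (placing the clause line at polynomially large height $H$ and bounding the ratio of the smallest clause-gadget area to the largest ``midway intersection'' area) showing the spurious area seen at crossings is negligible compared to a clause gadget, so that the seen area is only \emph{approximately} $B+ka$ --- exact equality, which your budget $A=B+K$ relies on, is not available.

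Two further gaps. First, a slit is a physical opening in the domain, and each room is connected by slits to corridors of \emph{different} variables; hence a path can enter a room through one slit and leave through another, shortcutting past intermediate gadgets and their baseline area. Your soundness step (``detours only enlarge the union'') does not cover such shortcuts, which genuinely change the route rather than extend it. The paper blocks this by attaching a large-area chamber to each literal--clause corridor, scaled so that any path leaving the variable chain sees more than any legitimate path sees in total. Second, once slits and viewing tubes are cut, the non-clause area visible from the true corridor of $x_i$ differs from that of the false corridor (it depends on how many clauses each literal occurs in), so the baseline is not automatically choice-independent; the paper adds explicit ``area equalizers,'' and also truncates clause gadgets so they all have equal area $a$. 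You name the baseline issue as an obstacle but give no construction for it. In short: the approach is the right one, but what you have written is the statement of the gadget design problem, not its solution.
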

\begin{proof}
We reduce from min2SAT: find truth assignment for a set of $n$ variables, satisfying the minimum number of given two-literal disjunctive clauses. (Inside this proof $n$ will denote the number of variables and $c$ the number of clauses.) Figure~\ref{reduction}, left illustrates the construction. A variable gadget is an isosceles triangle. The triangles for the variables are stacked into a \emph{Christmas tree}, with \s and \t placed at the top and the root respectively. Going through the left (resp.\ right) vertex of a triangle represents setting the variable to True (resp.\ False). The clauses are all put on a horizontal line above the Christmas tree so that the segment between any literal and any clause does not intersect the tree. Each clause is connected to its literals, and all connections (including the ones forming the Christmas tree edges) are thin corridors forming the domain; a clause gadget is simply the intersection of the two corridors. The idea of the reduction is to have an \s-\t path go through all variable gadgets, choosing whether to go through the variable or its negation in every gadget: the fewer clause gadgets are seen, the fewer clauses are satisfied.
\begin{figure}\centering\includegraphics{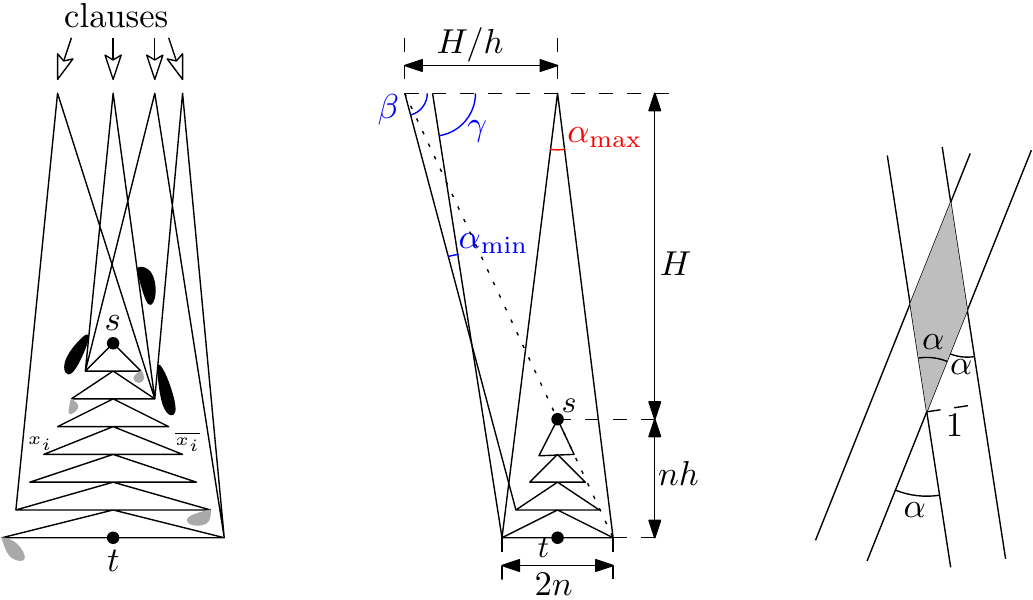}
\caption{Left: The reduction from min2SAT. All segments are thin corridors of~\P. Some leakage-blocking high-area chambers are shown black and some area equalizers are shown gray (both black and gray belong to the domain). Middle: the largest angle at a clause and the smallest angle at a midway intersection. The $c$ clauses are spread evenly on the segment of width $2H/h$; thus, the distance between clauses (the base of the triangle with angle $\alpha_{\min}$ at the apex) is $\frac{2H}{h(c-1)}$. Right: Midway intersection of unit-width corridors is area-$\frac1{\sin\alpha}$ rhombus with side $\frac1{\sin\alpha}$ and angle $\alpha$.}\label{reduction}
\end{figure}

A few technicalities have to be taken care of:\begin{itemize}\item Two variable--clause corridors, leading to different clauses, may intersect \emph{midway}, meaning that the intersection area may be seen twice. We have to make sure that the area of such a midway intersection is much smaller than the clause gadget area. Being smaller by a factor $4c^3$ will suffice: even if parts of a corridor are seen due to the midway intersections with all (at most $2c-1$) other corridors, the total seen corridor's midway area will still be smaller (by a factor $\approx2c$) than the area of a single clause gadget. Moreover, with such small midway intersections, they may be neglected altogether when counting the areas of clause gadgets seen from literals: the total areas of all (at most $4c^2$ midway intersections) will be smaller by at least a factor of $c$ than the area of a single clause gadget. To reduce areas of the midway intersections in comparison to the clause gadgets areas, we put the clause gadgets high above the Christmas tree -- at height $H$, to be determined later (Fig.~\ref{reduction}, middle). The area of intersection of two corridors (Fig.~\ref{reduction}, right) is inversely proportional to the sine of the angle between the corridors (the corridors are all of the same width), so the smallest-area clause gadget would be the one for the clause $x_n\lor\overline{x_n}$ placed directly above the apex of the Christmas tree (since we do not control which clause goes where on the clauses line, we have to consider the worst case); let $\alpha_{\max}$ be the angle between the corridors defining the gadget. Assuming the height of every variable gadget triangle is $h$ and their bases have lengths $2,4,\dots,2n$ (refer to Fig.~\ref{reduction}, middle),
\[
\alpha_{\max}=2\arctan\frac{n}{H+nh}.
\]
On the other hand, the \e{smallest angle} between two interesting corridors that do not lead to the same clause (i.e., the smallest angle that may define the area of a midway intersection) can be formed by corridors leading to last and last-but-one clause from the last-but-one and last variables $x_n,x_{n-1}$ resp.\ (changing the endpoints of the corridors would only increase the angle of intersection); the angle is
\[
\alpha_{\min}=\gamma-\beta=\arctan\frac{H+nh}{H/h-\frac{2H}{h(c-1)}-n}-\arctan\frac{H+(n-1)h}{H/h-(n-1)}.
\]
By trigonometric formulas, the ratio $\sin\alpha_{\min}/\sin\alpha_{\max}$, after being squared a constant number of times, is a ratio of polynomials. A Mathematica script shows that this ratio tends to infinity as $H$ grows (Appendix~\ref{listing} gives the Mathematica listing); hence, at a polynomially large $H$, the ratio becomes larger than $4c^2$, as we need.
\item We make sure that the area, around a clause gadget, seen from one literal but not from the other (Fig.~\ref{technicalities}, left), is negligible in comparison with the clause gadget area (seen from both literals of the clause). This is already taken care of by the above, as the whole construction is made tall (large~$H$).
\item Leakage of paths from the Christmas tree into variable--clause corridors is prevented by attaching a large-area chamber to each corridor (between the literal and the first intersection of two corridors), so that a path going through the corridor would see the whole area of the chamber. To ensure that the area of a single chamber is larger than the area seen by any path through the Christmas tree, the whole construction is scaled up while keeping the width of the corridors fixed: since the areas available for the chambers grow quadratically with the scaling factor and the areas seen along the corridors grow linearly, a polynomial scaling will suffice to ensure that the chambers areas are large enough to prevent the path going anywhere except through the variable gadgets.
\item We attach area equalizers to the literals so that no matter whether the path passes through the variable or its negation, it sees the same non-clause area (the areas may be different between the different variables; we only make sure that for any single variable the seen non-clause area does not depend on whether the variable is set to true or false by the path).
\item In the construction so far, different clause gadgets may have different areas; let $a$ denote the smallest area of a clause gadget. We make sure that all clause gadgets have area $a$, which can be done e.g., by appropriately cutting off the clause gadgets from the top (Fig~\ref{technicalities}, right).
\end{itemize}
Now, all \s-\t paths, going through the Christmas tree only, will see the same non-clause area~$A$. The total area seen by a path is then $\approx A+ka$ where $k$ is the number of clauses seen by the path, which is the same as the number of clauses satisfied by the truth assignment set by the path (we say that the seen area is \e{approximately} equal to $A+ka$ because of the non-counted areas that may be seen---midway intersections and parts seen by one literal only---which we made sure to be negligible in comparison with~$a$).
\end{proof}

\begin{figure}\centering
\begin{minipage}[b]{.6\columnwidth}\centering\includegraphics{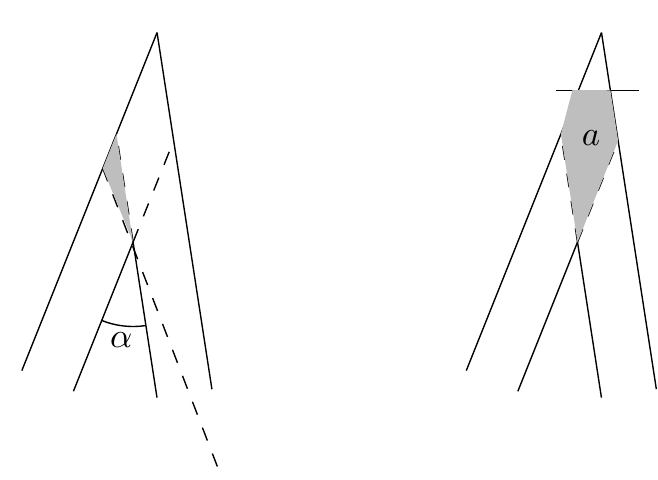}\caption{Left: Area seen by one literal only (gray) is negligible for small $\alpha$. Right: Decreasing clause gadget area.}\label{technicalities}\end{minipage}
\hfil
\begin{minipage}[b]{.3\columnwidth}\centering\includegraphics{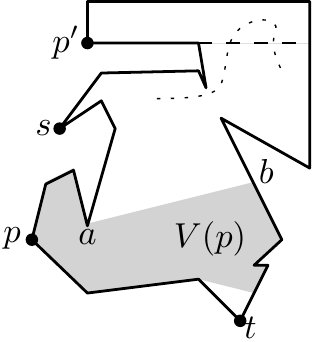}\caption{\V\p\ is shaded; $ab$ is the essential cut of \p. A dotted path, crossing the cut of $p'$ (dashed), can be shortcut along the cut.}\label{simple}\end{minipage}
\end{figure}

In Section~\ref{2sat} we discuss why we could not use \e{planar} min2SAT to prove hardness of {\sc Geometric Secluded Path}, avoiding dealing with the crossings.

\subsection{Simple polygons}\label{ssimple}
We show that in a simple polygon shortest paths see least area:
\begin{theorem}\label{tsimple}
If \P is a simple polygon, the shortest \s-\t path is the solution to \sc{Geometric Secluded Path}.
\end{theorem}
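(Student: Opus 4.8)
The plan is to prove the stronger statement that the weak visibility region of the shortest path is contained in that of \emph{every} \s-\t path; since area is monotone under inclusion, this immediately gives that the shortest path sees the least area. Let $\sigma$ denote the shortest \s-\t path, which in a simple polygon is unique, turns only at reflex vertices, and admits no shortcut. The claim to establish is $V(\sigma)\subseteq V(\pi)$ for an arbitrary \s-\t path $\pi$.

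I would argue this by contradiction. Suppose some point $q\in P$ is seen by $\sigma$ but not by $\pi$, and let $p$ be a point of $\sigma$ with $p\in V(q)$ (visibility being symmetric). The visibility polygon $V(q)$ is star-shaped from $q$, and its boundary consists of edges of $P$ together with the \emph{windows} (cuts) emanating from the reflex vertices that block the view of $q$. The complement $P\setminus V(q)$ decomposes into \emph{pockets}, each bounded by a single window and a portion of the boundary of $P$, so that the only way to leave a pocket into $V(q)$ is to cross its window.

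Since $\pi$ does not see $q$, it is disjoint from $V(q)$; being connected, $\pi$ then lies inside a single pocket $R$, whence both endpoints $s$ and $t$ belong to $R$. The shortest path $\sigma$ also runs from $s$ to $t$, yet it passes through $p\in V(q)$, which lies outside $R$, so $\sigma$ must leave $R$ and return, crossing the window $ab$ of $R$ at two points $u$ and $v$. Replacing the subpath of $\sigma$ between $u$ and $v$ by the straight segment $uv$---which lies on the chord $ab$ and hence inside $P$---yields a strictly shorter \s-\t path, contradicting the minimality of $\sigma$. Therefore $q$ is seen by $\pi$ after all, and $V(\sigma)\subseteq V(\pi)$.

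I expect the main obstacle to lie in the careful treatment of the pocket/window structure of $V(q)$ and its degeneracies: confirming that any connected path avoiding $V(q)$ stays within one pocket, handling the case where $\sigma$ merely grazes $ab$ rather than crossing it transversally, and checking that the shortcut along $ab$ is genuinely length-decreasing while remaining inside $P$ even when $u$, $v$, or the reflex apex of the window coincide. These boundary cases demand the most care, but the underlying mechanism---that an unseen point forces $\sigma$ into a forbidden double crossing of a cut---is robust.
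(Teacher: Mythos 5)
Your argument is essentially the paper's own proof: the window $ab$ of the pocket $R$ containing $s$ (and, in your contradiction scenario, also $t$) is exactly what the paper calls the \emph{essential cut} of $q$, and both proofs reduce to the observation that the shortest path cannot cross a cut that fails to separate $s$ from $t$, since it could be shortcut along that chord. You merely organize it as a contradiction centered on an unseen point rather than as the paper's direct classification of which cuts the shortest path crosses, and the degeneracies you flag (grazing contact at a reflex vertex) are likewise glossed over in the paper; they affect only a measure-zero set of points $q$ and hence not the seen area.
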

\begin{proof}The visibility polygon \V\p\ of a point $p\in P$ is bounded by edges and chords of \P, with each chord connecting a vertex of the polygon to a point on its boundary. If \P is a simple polygon and \p does not see \s ($\s\notin\V\p$), then there is a unique chord separating \p from \s; the chord is called the \emph{essential cut} of \p~\cite{bengt} (Fig.~\ref{simple}).

If an essential cut does not separate \s from \t, then the shortest \s-\t path does not cross the cut, for otherwise, the path could be shortcut along the cut. That is, the shortest path crosses those and only those cuts that separate \s from \t. But any other path also has to cross all such cuts, i.e., has to see all the points seen by the shortest path.
\end{proof}

For polygons with a small number of holes one may go through all homotopy types of simple (without self-intersections) \s-\t paths: a simple argument (Lemma~\ref{homotopy} in the appendix) shows that a shortcut of a path sees less than the original path, and hence the locally shortest path is the secluded path within its homotopy class. 

\section{A PTAS for minimizing integral exposure}\label{ptas} 
In Section~\ref{generic} we give a generic way to partition the domain in such a way that the visible area is approximately constant within a cell of the partition; then in Section~\ref{implementation} we present details of a slightly different partitioning, having straight-line edges, on which a PTAS for the WRP can be applied to find the path with approximately minimum integral exposure.

\subsection{Reduction to WRP with curved regions}\label{generic} 
We first compute the \e{visibility graph} of \P, i.e., the graph connecting pairs of mutually visible vertices of the domain, and extend every edge of the graph in both directions maximally within \P. The extensions of the visibility edges split \P into $O(n^4)$ cells such that the visibility polygon \V\p\ is combinatorially the same for any point \p within one cell of the subdivision; the subdivision is called the \e{visibility decomposition} of \P \cite{decomposition}. In particular, the area |\V\p| is given by the same formula for any point \p in one cell $\sigma$ of the decomposition.
Specifically, the rays from \p through the seen vertices of \P split \V\p\ into $O(n)$ triangles (Fig.~\ref{V}, left). The side of any triangle, opposite to \p, is a subset of an edge of \P; we call this side the \e{base} of the triangle. Each of the other, non-base sides is formed by a ray passing through a vertex $r'$ of \P and ending at a point $r$ on the base. (In Section~\ref{implementation} we will differentiate between \e{fixed-endpoint} sides for which $r=r'$ is an endpoint of the base and \e{rotating} rays which rotate around $r'$ if \p moves; here we treat both types of sides with a single formula, since fixed-endpoint sides may be viewed as a special case of rotating sides with $r=r'$.)

To write the formula for the area of the triangle $pqr$, we follow \cite[Appendix~A.1]{shop} and assume that the base is the x-axis and that both $p=(x,y)$ and $r'=(a,b)$ lie above the base ($y,b\ge0$); then the abscissa of $r$ is $x-y(x-a)/(y-b)$ (Fig.~\ref{V}, right). Let $q'$ be the vertex that defines the other side, $pq$, of $pqr$; to simplify the formulas, assume w.l.o.g.\ that $q'$ lies on the y-axis: $q'=(0,d)$. The abscissa of $q$ is then $x-yx/(y-d)$, and the area $y|rq|/2$ of the triangle $pqr$ is
\begin{equation}\label{area}|pqr|=\frac{y^2}2\left(\frac{x-a}{y-b}-\frac{x}{y-d}\right)\end{equation}

\begin{figure}\centering
\begin{minipage}[b]{.65\columnwidth}\centering\includegraphics{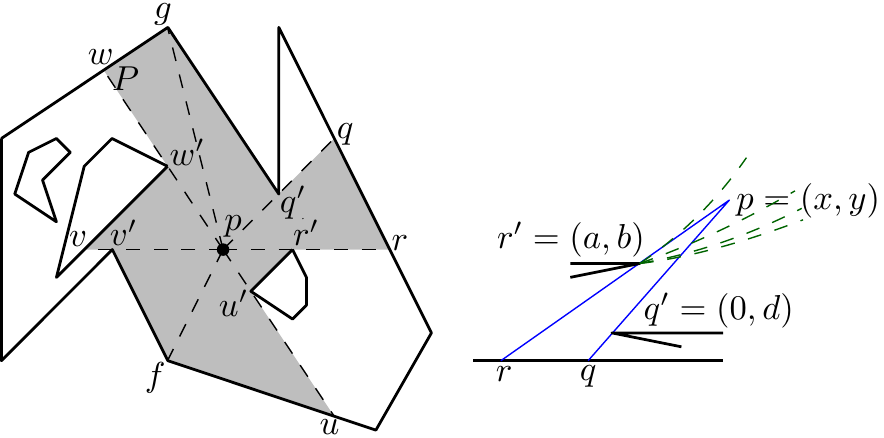}\caption{Left: Domain \P with 3 holes and a point $\p\in\P$; \V\p\ is shaded. Triangle $pqr$ has two rotating sides, triangles $puf,pvw',pwg$ have one fixed-endpoint and one rotating side; the other triangles have two fixed-endpoint sides. Right: $|pqr|=y|rq|/2$. Green dashed curves are level sets of $|pqr|$.}\label{V}
\end{minipage}
\hfil
\begin{minipage}[b]{.3\columnwidth}\centering\includegraphics{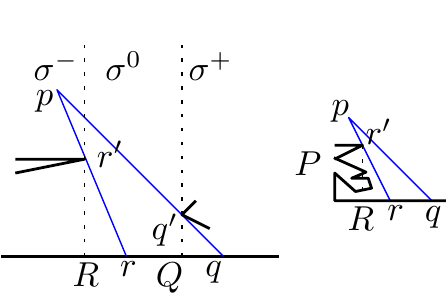}\caption{Left: For $\p\in\sigma^-$, $|r'Rr|$ is subtracted from $C$ while $|q'Qq|$ is added; for $\p\in\sigma^0$, both areas are added; for $\p\in\sigma^+$, $|r'Rr|$ is added while $|q'Qq|$ is subtracted. Right: $r'R$ is not fully inside~\P.}\label{refine}
\end{minipage}
\end{figure}

Next, to obtain a piecewise-constant (1+\eps)-approximation of the area |\V{$(x,y)$}| visible from point $(x,y)\in\P$, we use level sets of the area function (\ref{area}). For a given area $A$, the equality $|pqr|=A$ is attained along the curve $\gamma_A$
\begin{equation}\label{gamma}
x=\frac{2A/y^2+a/(y-b)}{1/(y-b)-1/(y-d)}.
\end{equation}
Consider a cell $\sigma$ of the visibility decomposition. We split $\sigma$ with the curves $\gamma_{A_i}$ for a set $\mathcal{A}=(A_1,\dots,A_i,\dots)$ of areas forming geometric progression with common ratio 1+\eps: $A_i=(1+\eps)A_{i-1}$. Let $S_i$ denote the set of points \p for which the area of the triangle $pqr$ is between $A_{i-1}$ and $A_i$ (that is, $S_i=\{\p\in\sigma:A_{i-1}<|pqr|\le A_i\}$ are the points between $\gamma_{A_{i-1}}$ and $\gamma_{A_{i}}$). We call $S_i$ a \e{curved sector} because in equation (\ref{gamma}), we have $\lim_{y\to b}x(y)=a$ for any $A$, i.e., all curves $\gamma_A$ have $r'=(a,b)$ as a common point. (We put a GeoGebra graphics to play with the level sets to see how they look at \url{https://www.geogebra.org/m/cvxvhfcf}.) We assign the same \e{weight} $A_i$ to all points in the curved sector; this way, for $i>1$ the weight of any point $\p\in S_i$ is within factor 1+\eps of the area of the triangle $pqr$:
\begin{equation}\label{w}|pqr|\le A_i\le(1+\eps)|pqr|\qquad\qquad\forall\p\in S_i,\forall i>1\end{equation}

For every cell $\sigma$ of the visibility decomposition, we overlay the level sets from each of the $O(n)$ triangles of \V\p\ for $\p\in\sigma$. We confine the level sets to the cell, i.e., for each curve $\gamma_A$ use only the intersection $\gamma_A\cap\sigma$. We call each cell of the overlay a \e{region} and set the weight of the region to the sum of the weights of the curved sectors whose intersection forms the region.

To bound the number of level sets used (i.e., to determine the first area $A_1$ in the geometric sequence $\cal A$ and the needed length of the sequence), assume that vertices of \P have integer coordinates and let \L denote the largest coordinate. (This model and its variants are common for WRP; in particular, the running times of known solutions for WRP \cite{wrp,inkulu,aleksandrov,anisotropic,querying,refinement} depend on \L.) Now, consider a triangulation $T$ of \P{} -- any point $\p\in\P$ lies inside a triangle $\tau$ of $T$ and sees all of the triangle; thus the area |\V\p| is at least the area of $\tau$. Since $\tau$ has integer coordinates, by Pick's Theorem \cite{pick} the area of the triangle is at least 1/2:\begin{equation}\label{pick}|\V\p|\ge1/2\end{equation}

We are now ready to prove that it suffices to have \begin{equation}\label{A1}A_1=\frac\eps{2n}\end{equation}Indeed, suppose \V\p\ consists of $K$ triangles of areas $\Delta_1,\dots,\Delta_K$ and let $A_1,\dots,A_K$ be the weights of the curved sectors that form the region to which \p belongs; the weight of the region is thus $w(p)=A_1+\dots+A_K$. Classify the triangles as ``small'' and ``large'', with the former having area at most $A_1$ (and thus having \p lie in the sector $S_1$) and the latter having area larger than $A_1$ (with \p in a sector $S_i$ for $i>1$); let $l=\{k:\Delta_k>A_1\}$ be the indices of the large triangles. By (\ref{w}), for every large triangle $k\in l$, $A_k\le(1+\eps)\Delta_k$. Since $K\le n$, we have
\begin{equation}\label{wapx}w(p)=\sum_{k\in l}A_k+\sum_{k\notin l}A_k\le(1+\eps)\sum_{k\in l}\Delta_k+nA_1\le(1+\eps)|\V\p|+\eps\frac12\le(1+2\eps)|\V\p|\end{equation}
where the last inequality is due to (\ref{pick}).
\begin{proposition}\label{prop}If WRP on $N$ regions with curved boundaries of constant algebraic complexity can be (1+\eps)-approximated in time $T(N,\frac1\eps)$, then a $(1+\eps)^2$-approximation to the minimum integral exposure path can be found in time $T(\frac{n^{10}}{\eps^2}\log^2(nL),\frac1\eps)$.\end{proposition}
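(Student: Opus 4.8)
The plan is to read off the piecewise-constant weight $w$ built above as a single WRP instance and then chain the two approximation guarantees. The key observation is that the integral exposure $\int_\pi|\V\p|\,\mathrm{d}p$ is nothing but the weighted length of $\pi$ for the position-dependent weight $|\V\p|$; the preceding construction replaces this weight by $w$, which is constant on each region of the overlay. So I would (i)~declare the regions, carrying weight $w$, to be a WRP instance inside \P{} (the complement of \P{} getting forbidden weight so the path stays in the domain), (ii)~call the assumed WRP solver on it, and (iii)~show that its output is a near-optimal integral-exposure path.

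For step~(i) I must bound the number $N$ of regions and confirm that each has a curved boundary of constant algebraic complexity. Every level set $\gamma_A$ of~(\ref{gamma}) is a rational curve of bounded degree, so two of them --- or a curve and an extended visibility edge of the decomposition --- meet in $O(1)$ points, which gives the constant-complexity requirement on the resulting faces. To count the curves: the visibility decomposition has $O(n^4)$ cells, in each cell \V\p{} is cut into $O(n)$ triangles, and for each triangle the geometric progression $\mathcal{A}$ of ratio $1+\eps$ runs from $A_1=\eps/(2n)$ (see~(\ref{A1})) up to the largest possible triangle area, at most the area of \P{} and hence $O(L^2)$ under integer coordinates bounded by \L; this is $O(\frac1\eps\log(nL))$ level sets per triangle. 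Overlaying everything --- equivalently, forming the planar arrangement of the $O(\frac{n^5}\eps\log(nL))$ level-set curves together with the $O(n^2)$ extended visibility edges --- yields, via the $O(m^2)$ bound on an arrangement of $m$ constant-complexity curves, $N=O(\frac{n^{10}}{\eps^2}\log^2(nL))$ regions (confining each $\gamma_A$ to its own cell would in fact give fewer, but this cruder count already matches the claim). Each region lies in a single decomposition cell and in a single curved sector of each of that cell's triangles, so $w$ is well defined and constant on it.

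For step~(iii) I would chain two inequalities. Besides the upper bound $w(p)\le(1+2\eps)|\V\p|$ of~(\ref{wapx}), the left inequality of~(\ref{w}) --- together with the fact that ``small'' triangles are weighted $A_1\ge\Delta_k$ --- gives the matching lower bound $w(p)\ge|\V\p|$; running the whole construction with $\eps/2$ in place of $\eps$ (which changes only absorbed constants) turns these into $|\V\p|\le w(p)\le(1+\eps)|\V\p|$. Writing $E(\pi)=\int_\pi|\V\p|\,\mathrm{d}p$ for the true exposure and $W(\pi)=\int_\pi w\,\mathrm{d}p$ for the WRP cost, the pointwise sandwich integrates to $E(\pi)\le W(\pi)\le(1+\eps)E(\pi)$ for every path. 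Let $\pi^\ast$ minimize $E$, let $\pi_W$ minimize $W$, and let $\tilde\pi$ be the $(1+\eps)$-approximate path returned by the solver, so $W(\tilde\pi)\le(1+\eps)W(\pi_W)$. Then $E(\tilde\pi)\le W(\tilde\pi)\le(1+\eps)W(\pi_W)\le(1+\eps)W(\pi^\ast)\le(1+\eps)^2E(\pi^\ast)$, exactly the claimed guarantee; the running time is the single solver call $T(N,\frac1\eps)$ with $N$ as above.

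I expect the main obstacle to be the honest bookkeeping of step~(i): verifying that the level-set curves, and hence the faces they cut out, really are of constant algebraic complexity (so that both the $O(m^2)$ arrangement bound and the hypothesis of the proposition apply), and pinning down the per-triangle level-set count, which rests on the lower cutoff $A_1=\eps/(2n)$ and the crude $O(L^2)$ ceiling on triangle areas. A subtler point, easy to miss, is that the solver optimizes the surrogate cost $W$, so the lower bound $w\ge|\V\p|$ --- not merely the upper bound of~(\ref{wapx}) --- is what lets me transfer the guarantee back to the \emph{true} exposure $E(\tilde\pi)$ of the returned path.
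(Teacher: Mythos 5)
Your proposal is correct and follows essentially the same route as the paper: bound the number of regions by overlaying the $O(\frac1\eps\log(nL))$ constant-degree level-set curves per triangle with the visibility decomposition (arriving at the same $O(\frac{n^{10}}{\eps^2}\log^2(nL))$ count, just with the bookkeeping organized as a single global arrangement rather than per cell), and then chain $E(\tilde\pi)\le W(\tilde\pi)\le(1+\eps)W(\pi_W)\le(1+\eps)W(\pi^\ast)\le(1+\eps)^2E(\pi^\ast)$, which is exactly the paper's inequality~(\ref{eint}). Your explicit remark that the transfer back to true exposure needs the lower bound $w(p)\ge|\V\p|$ (from the left inequality of~(\ref{w}) plus the $A_1$ cutoff for small triangles) is a point the paper makes only tersely, but it is the same argument.
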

\begin{proof}For an upper bound on the sector weight, note that obviously $\forall\p\in\P,\, |\V\p|\le\L^2$. Hence, the number of needed level sets is at most $\log_{1+\eps}(2nL^2)=O(\frac1\eps\log(n\L))$. The level sets are defined for each of the $O(n^3)$ triples $r',q',\bar{qr}$ where $r',q'$ are vertices and $\bar{qr}$ is the side of \P containing $qr$; thus overall there are $O(\frac{n^3}\eps\log(n\L))$ level set curves. Since each curve $\gamma_A$ has constant algebraic degree (cf.~(\ref{gamma})), any two curves intersect $O(1)$ times, so the complexity of the overlay of the level sets inside the cell $\sigma$ of the visibility decomposition is $O((\frac{n^3}\eps)^2\log^2(nL))$. Since there are $O(n^4)$ cells, our construction splits \P into $O(\frac{n^{10}}{\eps^2}\log^2(nL))$ regions of constant weight.

By (\ref{wapx}), region weights approximate the visibility area to within 1+\eps (use \eps:=\eps/2 to get rid of the factor 2 in front of \eps); hence finding a (1+\eps)-approximate solution to the WRP on our regions provides a $(1+\eps)^2$-approximation to the minimum integral exposure path. Formally, let $\pi^*$ be the minimum integral exposure path (the optimal solution to {\sc Integral Geometric Secluded Path}), let $\bar\pi$ be the minimum-weight path through our regions (the optimal solution to WRP) and let $\pi$ be the (1+\eps)-approximate solution to WRP; then
\begin{equation}\label{eint}\int_{\pi}|\V\p|\,\mathrm{d}p\,\le\,\int_{\pi}w(p)\,\mathrm{d}p\,\le\,(1+\eps)\int_{\bar\pi}w(p)\,\mathrm{d}p\,\le\,(1+\eps)\int_{\pi^*}w(p)\,\mathrm{d}p\,\le\,(1+\eps)^2\int_{\pi^*}|\V\p|\,\mathrm{d}p\end{equation}
where the first inequality is due to the left inequality of~(\ref{w}), the second is because $\pi$ approximates $\bar\pi$, the third is because $\bar\pi$ is optimal w.r.t.\ $w$, and the last one is due to the right inequality in~(\ref{w}).\end{proof}

\subsection{A detailed implementation}\label{implementation}Applicability of Proposition~\ref{prop} remains questionable due to absence of an algorithm for WRP with curved regions boundaries. In this section we present another, direct approach to reduce our problem to WRP on a \e{polygonal} subdivision. We refine the visibility decomposition (without affecting the asymptotic complexity) and recalculate the area functions so that they have \e{linear} levels. This way, the regions in the overlay of the level sets are convex, so existing WRP solutions can be applied directly.

Specifically, we differentiate between fixed-endpoint and rotating sides of the triangles into which \V\p\ is split: the former end at a vertex of \P while the latter rotate around a vertex if \p moves (see Fig.~\ref{V}, left). Triangles whose both sides are fixed-endpoint are easy to handle: (while the area of each individual triangle changes as \p moves,) the \e{total} area of all such triangles remains the same (moving \p just redistributes the area between the triangles, ``stealing'' from some and ``giving'' to others). We therefore call such triangles \e{fixed}.

Consider now a triangle $pqr$ whose both sides $pq,pr$ are rotating around vertices $q',r'$ resp.\ (this is the most general case: if one of the sides, say, $pq'$ is fixed, we can just assume $q=q'$); assume that $rq$ is horizontal (Fig.~\ref{refine}, left). We refine the visibility decomposition by extending the vertical segments through each of $r',q'$ maximally up and down; let $R,Q$ be the feet of the perpendiculars dropped from $r'$ and $q'$ resp.\ onto the supporting line of $pq$ (any of $r'R,q'Q$ may lie only partially inside \P, as in Fig.~\ref{refine}, right -- this is not an issue). Note that $|Rr'pq'Q|$ may be added to the fixed-triangles areas -- the total area of all fixed triangles plus the areas of the pentagons $Rr'pq'Q$ for all the triangles with \p as the apex does not depend on \p (while \p remains in the same cell). Denote this total area by $C$. The area $|\V\p|$ is obtained from $C$ by adding/subtracting the areas of the triangles $r'Rr$ for all vertices $r'$ on which a side of a triangle of \V\p\ rotates -- whether $|r'Rr|$ is added or subtracted depends on whether the triangle is in \V\p\ or not:
\begin{equation}\label{C}|\V\p|=C+\sum_{r'\in\oplus}\d-\sum_{r'\in\ominus}\d\end{equation}
where $\d=|r'Rr|$ and $\oplus$ (resp.\ $\ominus$) is the set of vertices whose triangles $r'Rr$ are visible (resp.\ invisible) from \p.

\begin{figure}\centering\includegraphics{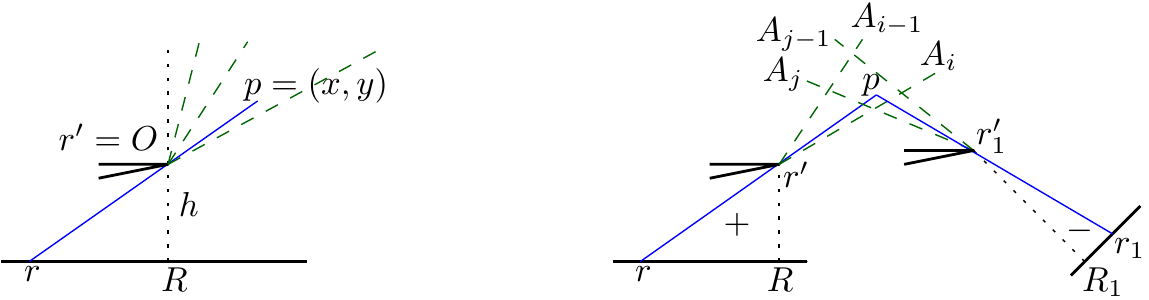}
\caption{Left: Green dashed are level sets of the area function \d (\ref{delta}). Right: $A_i$ contributes positively to $w(p)$ while $A_j$ comes with minus into $w(p)$, because for \p in this region, $r'\in\oplus$ ($r'Rr\in\V\p$) while $r_1'\in\ominus$ ($r_1'R'r_1'\notin\V\p$).}\label{rrr}\end{figure}

Assume that $r'$ is the origin $O$ and that the supporting line of $rR$ is the horizontal line $y=-h$, and let $p=(x,y)$ with $x\ge0$ (Fig.~\ref{rrr}, left). Then
\begin{equation}\label{delta}\d=\frac{h^2}2\frac xy\end{equation}
and a level set $\gamma_A=\{p=(x,y):\d=A\}$ of the function~(\ref{delta}) is a ray (emanating from the origin) of constant $x/y$: since the height $r'R$ of the triangle is fixed, \d is constant whenever $r$ is fixed. As in Section~\ref{generic}, we draw the rays for a set $\mathcal{A}=(A_1,\dots,A_i,\dots)$ of areas forming geometric progression with common ratio 1+\eps and assign the weight $A_i$ to all points in the sector $S_i=\{\p\in\sigma:A_{i-1}<\Delta_r\le A_i\}$ between $\gamma_{A_{i-1}}$ and $\gamma_{A_i}$ (we again use the weight $A_1=\eps/(2n)$ for points between $\gamma_0$ and $\gamma_{A_1}$).
Also as in Section~\ref{generic}, we define a \e{region} as a cell in the overlay of the rays emanating from the vertices $r'$ of \P. Finally, the weight $w(p)$ of any point \p in a region is determined by $C$ and the weights of the sectors forming the region: for a vertex $r'\in\oplus$ the weights of the sectors of $r'$ are added to regions weights; for a vertex $r'\in\ominus$, the weights are subtracted (Fig.~\ref{rrr}, right).

The fact that our subdivision into regions provides a (1+\eps)-approximation to |\V\p| can be argued similarly to Section~\ref{generic};
see Appendix~\ref{app:ptas} for the formulas:
\begin{restatable}{theorem}{tptas}\label{t:ptas}If WRP on $N$ regions can be (1+\eps)-approximated in time $T(N,\frac1\eps)$, then a $(1+\eps)^3$-approximation to the minimum integral exposure path can be found in time $T(\frac{n^{4}}{\eps}\log(nL),\frac1\eps)$.\end{restatable}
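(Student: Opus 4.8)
The plan is to mirror the proof of Proposition~\ref{prop}, replacing its curved level sets by the straight rays of Section~\ref{implementation} and its per-point bound~(\ref{wapx}) by a \e{signed} analogue dictated by~(\ref{C}). Two ingredients are required, exactly as before: a per-point guarantee that the region weight $w(\p)$ approximates $|\V\p|$ up to a factor close to $1$, and a bound on the number $N$ of regions. The theorem then drops out by feeding these into the chain~(\ref{eint}) through the assumed $(1+\eps)$-approximation for WRP.

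First I would prove the per-point bound. Let $A_{r'}$ be the sector weight contributed by a vertex $r'$ (constant on each region), so that $\d\le A_{r'}\le(1+\eps)\d$ when $r'$ falls in a large sector and $A_{r'}=A_1=\eps/(2n)\ge\d$ when it falls in the innermost sector. Since $w(\p)=C+\sum_{r'\in\oplus}A_{r'}-\sum_{r'\in\ominus}A_{r'}$, subtracting~(\ref{C}) gives
\[w(\p)-|\V\p|=\sum_{r'\in\oplus}(A_{r'}-\d)-\sum_{r'\in\ominus}(A_{r'}-\d),\]
a signed sum of the non-negative per-vertex errors $A_{r'}-\d$. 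Bounding it termwise yields $\bigl|w(\p)-|\V\p|\bigr|\le\eps\sum_{\mathrm{large}}\d+nA_1$, where the slack $nA_1=\eps/2$ is absorbed by Pick's bound~(\ref{pick}), $|\V\p|\ge1/2$, just as in~(\ref{wapx}). The remaining—and, I expect, hardest—step is to show that the \e{total} area $\sum\d$ of the correction triangles is $O(|\V\p|)$, so that the first term is $O(\eps)\,|\V\p|$. This does not follow as cheaply as in Section~\ref{generic}: there the triangles partition \V\p\ and trivially sum to $|\V\p|$, whereas here the $r'Rr$ are the wedges between the rotating sides and their perpendicular feet, and the $\ominus$ ones even lie outside \V\p\ (and possibly outside~\P). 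The argument has to charge each \d\ to the pentagon $Rr'pq'Q$ it borders—the pentagons together with the fixed triangles being exactly what makes up the cell-constant area $C$—and exploit that these wedges are essentially interior-disjoint to conclude $\sum\d=O(|\V\p|)$. Granting this, one obtains a two-sided estimate $(1-c\eps)\,|\V\p|\le w(\p)\le(1+c\eps)\,|\V\p|$ for an absolute constant $c$.

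Next I would bound $N$. By~(\ref{delta}), \d\ is constant along the rays of fixed $x/y$, so each level set is a single ray through its vertex $r'$; this is the crucial gain over Proposition~\ref{prop}, where the level sets were curves recomputed per triple. Because the level sets are now lines, the overlay cells are \e{convex}, so a polygonal-WRP solver applies directly. Drawing, for each vertex, the geometric-progression rays (their number bounded as in Proposition~\ref{prop} by the range of possible areas, $O(\frac1\eps\log(n\L))$) and overlaying them on the $O(n^4)$-cell visibility decomposition gives, after the bookkeeping of Appendix~\ref{app:ptas}, $N=O(\frac{n^4}\eps\log(n\L))$ regions of constant weight.

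Finally I would run the chain~(\ref{eint}) with the two-sided per-point estimate: using $|\V\p|\le w(\p)/(1-c\eps)$ at the first inequality, the $(1+\eps)$ WRP factor in the middle, and $w(\p)\le(1+c\eps)|\V\p|$ at the last, the integral exposure of the computed path is at most $\frac{(1+c\eps)(1+\eps)}{1-c\eps}$ times that of the optimum $\pi^*$. After the routine rescaling of \eps\ (the ``use $\eps:=\eps/2$'' device of Proposition~\ref{prop}) this is at most $(1+\eps)^3$, and the running time is the single WRP call $T(N,\frac1\eps)=T(\frac{n^4}\eps\log(n\L),\frac1\eps)$, as claimed.
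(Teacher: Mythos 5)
Your overall architecture matches the paper's: a two-sided per-point bound $(1-O(\eps))\,|\V\p|\le w(\p)\le(1+O(\eps))\,|\V\p|$, a region count of $O(\frac{n^4}\eps\log(n\L))$ obtained by overlaying the geometric-progression rays with the extensions of the visibility-graph edges, and the final chain of inequalities copied from Proposition~\ref{prop}. The gap sits exactly where you predicted it: your symmetric, termwise error bound forces you to prove $\sum\d=O(|\V\p|)$ over \emph{all} correction triangles, including the invisible ($\ominus$) ones, and the charging argument you sketch does not close. Charging each $\ominus$ wedge $r'Rr$ to the pentagon it borders bounds $\sum_{r'\in\ominus}\d$ by $C$; but whether $C=O(|\V\p|)$ is, via the identity $|\V\p|=C+\sum_{r'\in\oplus}\d-\sum_{r'\in\ominus}\d$, equivalent to the very statement you are trying to establish, so the argument is circular. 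The $\ominus$ wedges lie outside $\V\p$ (possibly outside $\P$) and are not a priori small relative to it.

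The paper's Appendix~\ref{app:ptas} avoids this symmetric claim by exploiting the signs asymmetrically. For the upper bound it observes that $C-\sum_{r'\in\ominus}\d\ge0$ (this is the area \p still sees after the $\oplus$ wedges are removed from $\V\p$), so that quantity may be multiplied by $(1+\eps)$ for free; combined with $A_i\le(1+\eps)\d$ on the $\oplus$ side and $-A_i\le-\d$ on the $\ominus$ side, this yields $w(\p)\le(1+\eps)\bigl(C+\sum_{r'\in\oplus}\d-\sum_{r'\in\ominus}\d\bigr)+\eps|\V\p|=(1+2\eps)|\V\p|$, the additive $\eps|\V\p|$ coming from the at most $n$ small sectors with $A_1=\eps/(2n)$ via Pick's bound, exactly as in~(\ref{wapx}); the lower bound $(1-2\eps)|\V\p|$ is the mirror-image manipulation. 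To repair your write-up, either supply an actual geometric proof that $\sum_{r'\in\ominus}\d=O(|\V\p|)$, or switch to this signed absorption. A smaller bookkeeping point: the ray families are indexed by pairs (vertex, supporting edge of the base), not by vertices alone, giving $O(\frac{n^2}\eps\log(n\L))$ level sets; this does not change the final region count you quote.
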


\section{On planar optimal satisfiability}\label{2sat}In this section we return to the (non-integral) {\sc Geometric Secluded Path} problem (Section~\ref{smin}) and elaborate on its connections to planar satisfiability, identifying, in particular, polynomially solvable and hard versions of planar minSAT and maxSAT. 

For a SAT instance with variables $V$ and clauses $C$, the graph $G=(V\cup C,E)$ of the instance is the bipartite graph whose vertices are the variables and the clauses, and whose edges connect each variable to a clause whenever the variable or its negation appears in the clause. In a \e{planar} SAT, $G$ is planar. Planar SAT has been the staple starting point for hardness reduction in computational geometry. In many cases, hardness of geometric problems was proved using \e{restricted} hard versions of planar SAT, such as:\begin{description}\item[V-cycle SAT:]$G$ remains planar after adding a cycle through $V$ ($G$ is no longer bipartite)
\item[VC-cycle SAT:]$G$ remains planar after adding a cycle through $V\cup C$ (this version, as well as V-cycle SAT were defined already in the original paper on planar SAT \cite{planar})
\item[Separable SAT:]A further restriction of V-cycle SAT: for any variable $x$, the V-cycle separates clauses containing $x$ from the clauses containing $\overline x$; in other words, no variable $x$ has an $x$-containing clause and a $\overline x$-containing clause on the same side of the V-cycle (this version is from \cite[Lemma~1]{planar}, but has no name there; we take the name from \cite{tippenhauer})
\item[Monotone SAT:]In any clause, all variables are either non-negated or all variables are negated (this version is defined for general, not only for planar SAT)\end{description}
See \cite{erik,tippenhauer,pilz} for in-depth treatment of restricted planar SAT versions and their uses.

When proving hardness of {\sc Geometric Secluded Path} in Section~\ref{smin} (Theorem~\ref{reduction}) we spent considerable effort on dealing with crossings between variable--clause connectors. A natural question is why we did not reduce from planar minSAT. The answer is that to avoid crossings, our reduction should better start from separable minSAT (Fig.~\ref{sep}, left), so that for any variable $x$, the connections from literal $x$ reside on one side of the Christmas tree and the connections from $\overline x$ -- on its other side (otherwise, a connection from, say, $\overline x$ would cross the Christmas tree itself; Fig.~\ref{sep}, middle). However:\begin{theorem}\label{t:separable:min}Separable minSAT can be solved in polynomial time.\end{theorem}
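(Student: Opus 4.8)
<br>

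The plan is to exploit the separability structure to reduce Separable minSAT to a tractable combinatorial optimization problem, most naturally a minimum-cut or shortest-path computation on a planar graph. The key observation is that separability imposes a strong geometric constraint: the V-cycle partitions the clauses into two groups relative to each variable, with all clauses containing a positive literal $x$ on one side and all clauses containing $\overline x$ on the other. Since the V-cycle is itself a planar curve through the variables, I would first fix an embedding and read off, for each variable, which ``side'' corresponds to setting $x$ true and which to setting it false.

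First I would set up the objective: in minSAT we seek a truth assignment minimizing the number of satisfied clauses. A clause is satisfied iff at least one of its (one or two) literals is made true; so a clause is \emph{unsatisfied} iff all its literals are false. Equivalently, minimizing satisfied clauses is the same as maximizing simultaneously-falsified clauses, and I would recast the whole problem in terms of which clauses can be forced to have all literals false. The crucial structural consequence of separability is that the ``positive demand'' and ``negative demand'' on each variable are geographically separated by the cycle, which I expect lets one decompose the clause set so that the contributions of the two sides interact only through the shared variable assignment in a controllable, planar way.

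The main technical step is to model the assignment as a cut. I would build an auxiliary planar graph in which each variable corresponds to a gadget admitting two states (true/false), each clause corresponds to a vertex or edge whose cost (one unit) is incurred exactly when the clause is satisfied, and the separability guarantees that these cost edges can be routed without crossings---precisely because clauses touching $x$ and clauses touching $\overline x$ lie on opposite sides of the V-cycle. The target is to show that a minimum cut (equivalently, by planar duality, a shortest cycle or shortest path in the dual) in this graph corresponds bijectively to an optimal assignment, with cut value equal to the number of satisfied clauses. Planar minimum cut is polynomial, so this yields the theorem.

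The hard part will be verifying that separability genuinely forces the clause-cost edges to be insertable planarly and, more delicately, that two-literal clauses (which link two \emph{different} variables) can be handled within a cut/flow formulation rather than merely single-variable clauses. A clause $x \lor y$ is satisfied unless both $x$ and $y$ are false, so its cost is a monotone submodular function of the two variable states; I would need to confirm that, under separability, the resulting instance remains a \emph{posiform} or cut-representable function whose minimization stays planar. If this step resists a direct cut formulation, the fallback is to phrase it as a planar shortest-path / dynamic-programming sweep along the V-cycle, processing variables in their cyclic order and maintaining the satisfaction status of clauses whose variables have all been decided; separability bounds the interface so the DP state stays polynomial. Either way, the essential insight to pin down is that separability converts the otherwise NP-hard minSAT into an instance with planar, bounded-interaction structure.
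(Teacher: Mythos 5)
Your overall instinct---that separability decouples the positive and negative occurrences of each variable and should make the problem tractable---is right, but the step you explicitly defer (``confirm that the resulting instance remains cut-representable'') is exactly where the proposal breaks. Writing the objective as a sum of per-clause costs $[\ell_1\lor\ell_2]$ and asking for a cut/flow representation fails for \emph{mixed} clauses: the cost of $x\lor\overline y$ takes values $f(0,0)=1$, $f(0,1)=0$, $f(1,0)=1$, $f(1,1)=1$, so $f(1,1)+f(0,0)=2>1=f(0,1)+f(1,0)$; this term is strictly supermodular and cannot be realized by an $(s,t)$-cut gadget. Separability does not forbid mixed clauses, so the claim is false as stated and needs an extra idea. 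Your fallback DP along the V-cycle also lacks a polynomial state bound: at a cut point of the cyclic order you must remember, for each clause with one decided and one undecided variable, whether its decided literal is already true, and planarity does not bound the number of such pending clauses.

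The missing idea can be supplied in two ways. (1) A repair of your approach: fix the embedding, let $A,B$ be the clauses on the two sides of the V-cycle, and flip every variable whose \emph{positive} occurrences lie in $B$. By separability this renaming makes every clause in $A$ all-positive and every clause in $B$ all-negative, i.e., the instance becomes monotone; each monochromatic clause cost is submodular (check $f(1,1)+f(0,0)\le f(0,1)+f(1,0)$ in both cases), so the objective is a sum of submodular terms and is minimizable in polynomial time---and note that planarity is never actually needed, so the planar-min-cut machinery is a detour. (2) The paper's route, which is shorter: minSAT is equivalent to minimum vertex cover in the \emph{clause conflict} graph $H$ (clauses adjacent when one contains the negation of a literal of the other); every edge of $H$ joins a satisfied clause in any assignment, and separability forces every edge of $H$ to join a clause in $A$ to a clause in $B$, so $H$ is bipartite and K\H{o}nig's theorem finishes. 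Both arguments hinge on the same observation your write-up never makes explicit: separability lets you two-color the clauses consistently with the polarity of every variable.
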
\begin{proof}Let $A$ be the clauses on one side of the variable chain and $B=C\setminus A$ -- the clauses on the other side. Construct the ``clause conflict'' graph $H$ \cite{ipl} whose vertices are the clauses and whose edges connect two clauses whenever one contains the negation of a literal in the other (Fig.~\ref{sep}, right). For any edge, at least one of the conflicting clauses will be satisfied in any truth assignment; thus, every edge in the graph will be incident to a satisfied clause. In particular, solving the minSAT is equivalent to finding minimum vertex cover (VC) in $H$. By the separability, for any variable $x$, all clauses with $x$ are in $A$ and all clauses with $\overline x$ are in $B$ (or vice versa); thus, any edge of $H$ connects a clause in $A$ with a clause in $B$, i.e., $H$ is bipartite, and the VC in it can be found in polynomial time.\end{proof}
\begin{figure}\centering\includegraphics{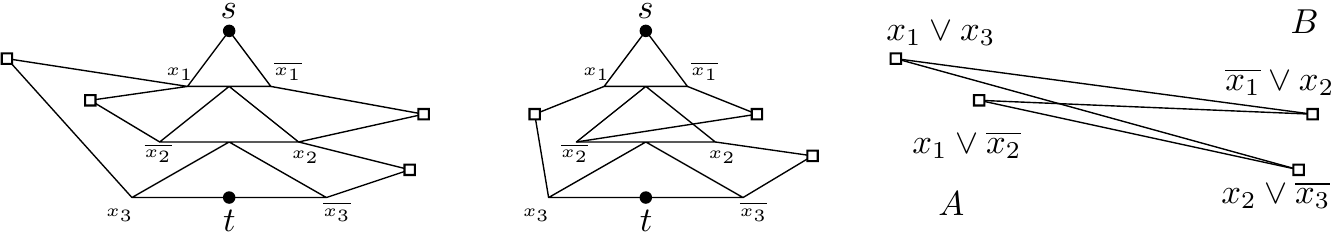}\caption{Left: Reduction from separable minSAT to {\sc Geometric Secluded Path} would have no crossings (note that some variables have their negations on one side of the Christmas tree, while others -- on the other; this is fine, since the definition of separable SAT requires separability \e{locally} for each variable; the separability does not have to be consistent across all variables). Middle: In non-separable minSAT, clause $\overline{x_1}\lor\overline{x_2}$ could be seen not only from \s-\t path via $\overline{x_2}$ but also from \s-\t path via $x_2$ due to the crossing with the Christmas tree. Right: the graph $H$ (which happens to be $K_{2,2}$) for the instance on the left.}\label{sep}\end{figure}
Note that the above proof does not use the planarity. In particular, monotone minSAT can be solved similarly: the clauses with all positive variables can form the set $A$ and the clauses with all negative variables -- set $B$ in the graph $H$ from the proof. We thus have:\begin{corollary}\label{c:monotone:min}Monotone minSAT (planar or not) can be solved in polynomial time.\end{corollary}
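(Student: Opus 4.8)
The plan is to mirror the argument of Theorem~\ref{t:separable:min} almost verbatim, observing that monotonicity gives us exactly the bipartition of the clause conflict graph $H$ that separability provided there, but \emph{without} any reliance on the V-cycle or on planarity. Concretely, I would first recall the clause conflict graph $H$ from the proof of Theorem~\ref{t:separable:min}: its vertices are the clauses, and two clauses are joined by an edge whenever one contains the negation of a literal appearing in the other. The same elementary observation carries over: for any such conflicting pair, at least one of the two clauses is satisfied under every truth assignment, so a set of clauses that can simultaneously be left unsatisfied must be an independent set in $H$, and hence minimizing the number of satisfied clauses is equivalent to computing a minimum vertex cover of $H$.

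The only remaining point is to verify that $H$ is bipartite in the monotone setting. Here I would partition the clauses into the set $A$ of clauses all of whose literals are positive and the set $B$ of clauses all of whose literals are negated; by the definition of monotone SAT every clause falls into exactly one of these two classes, so $A\cup B=C$ and $A\cap B=\emptyset$. Now I would check that no edge of $H$ can be internal to $A$ or internal to $B$: an edge arises only when one clause contains the negation of a literal of the other, but two all-positive clauses share only positive literals (no negated occurrences at all), so they can never conflict, and symmetrically two all-negative clauses cannot conflict either. Consequently every edge of $H$ joins a vertex of $A$ to a vertex of $B$, so $H$ is bipartite, and minimum vertex cover in a bipartite graph is solvable in polynomial time via K\H{o}nig's theorem (equivalently, a maximum-matching computation).

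I do not anticipate a genuine obstacle, since this is essentially a re-annotation of the previous proof with the monotone bipartition substituted for the separable one; the note immediately following Theorem~\ref{t:separable:min} already points out that planarity was never used, which is precisely why the corollary can claim the result ''planar or not.'' The one place demanding a line of care is the bipartiteness check: I would state explicitly that monotonicity precludes \emph{any} conflicting pair within $A$ or within $B$, rather than merely reducing their number, so that the bipartition is exact and the vertex-cover reduction goes through unchanged.
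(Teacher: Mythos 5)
Your proposal is correct and matches the paper's own argument exactly: the paper derives the corollary by reusing the clause-conflict-graph/vertex-cover reduction from Theorem~\ref{t:separable:min}, taking the all-positive clauses as $A$ and the all-negative clauses as $B$ to certify bipartiteness of $H$. Your added care in checking that \emph{no} conflicting pair can lie within $A$ or within $B$ is the right (and only) detail to verify, and it goes through as you describe.
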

In Appendix~\ref{app:sat} we prove NP-hardness of V- and VC-cycle min2SAT, as well as hardness of all four versions of planar max2SAT (these do not have relation to secluded paths; we give the proofs just for completeness of our treatment of planar optSAT):\begin{restatable}{theorem}{toptsat}\label{t:opt2sat}The following planar versions of max2SAT are NP-hard: V-cycle, VC-cycle, monotone, separable. V- and VC-cycle min2SAT are NP-hard.\end{restatable}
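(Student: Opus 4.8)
The plan is to handle the four max2SAT claims and the two min2SAT claims with two different engines, and to exploit the fact stated in the excerpt that \emph{separable} is a further restriction of \emph{V-cycle}: once a reduction outputs separable (or V-cycle) instances, hardness of the less restricted versions follows for free, so it suffices to hit the most restrictive structure in each case.

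For the four max2SAT versions I would reduce from \emph{maximum independent set in planar graphs}, which is NP-hard even in bounded degree. Given a planar $G=(V,E)$, introduce one variable $x_v$ per vertex, one ``reward'' clause $(x_v)$ per vertex, and, for each edge, $k$ parallel copies of the ``penalty'' clause $(\overline{x_u}\vee\overline{x_v})$ with $k>|V|$. For $k$ large, every optimal assignment keeps all penalty clauses satisfied (so it encodes an independent set) and among those maximizes the number of satisfied reward clauses (the size of the set); hence the max2SAT optimum equals $k|E|+\mathrm{OPT}_{\mathrm{IS}}$. The incidence graph is $G$ with each edge subdivided and a pendant clause on each variable, so it is planar whenever $G$ is. Crucially, every clause is all-positive or all-negative, so the instance is \emph{monotone}, giving monotone max2SAT hardness at once; routing the augmenting cycle so that each variable's single positive clause lands on the opposite side from its negative clauses yields a \emph{separable} (hence V-cycle) instance, and threading the cycle through the clause-vertices as well yields the VC-cycle instance. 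I expect the separability/cycle-routing to be the delicate part, and it may require starting from a structured planar independent-set instance rather than an arbitrary one.

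For the two min2SAT versions I would use the equivalence established in the proof of Theorem~\ref{t:separable:min}, namely that solving min2SAT is the same as finding a minimum vertex cover of the clause conflict graph $H$. The separable and monotone cases are in P precisely because there $H$ is bipartite (Theorem~\ref{t:separable:min}, Corollary~\ref{c:monotone:min}), so for the V-cycle and VC-cycle cases I must manufacture a nonbipartite $H$ from a hard vertex-cover instance. I would reduce from minimum vertex cover on bounded-degree planar graphs: realize the input graph as a conflict graph by using one private variable per edge (each edge becoming a single $K_{1,1}$ conflict) and expanding each vertex into a short path of two-literal clauses linked by conflicts, with the standard vertex gadget so that the vertex-cover value is preserved up to a computable offset. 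Since each clause then touches at most two of these variables, the two-literal bound is respected; planarity of the input is inherited, and the variables (and clauses) can be placed along a spine so that the V-cycle (resp.\ VC-cycle) is added without crossings.

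The main obstacle in both engines is reconciling three competing demands at once: encoding a genuinely NP-hard core, respecting the two-literal-per-clause bound, and enforcing the structural constraint (monotone, separable, or an augmenting V-/VC-cycle keeping the incidence graph planar). For min2SAT the crux is the conflict-graph realization: a 2-CNF conflict graph is exactly an edge-union of bicliques with every clause in at most two of them, so the vertex-expansion gadget must reproduce the correct vertex-cover value while staying planar and cycle-augmentable. For the separable and monotone max2SAT claims the crux is conceptual as much as technical, since these very restrictions make \emph{min}2SAT polynomial; one must verify that they do \emph{not} collapse max2SAT, which is exactly why reducing from independent set (a maximization problem) rather than from the conflict-graph/vertex-cover formulation is essential.
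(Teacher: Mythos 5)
Your overall architecture is genuinely different from the paper's, and only one of its six claims comes out intact. The part that does work is monotone planar max2SAT: the reduction from planar maximum independent set (reward clauses $(x_v)$, $k>|V|$ parallel penalty clauses $(\overline{x_u}\vee\overline{x_v})$ per edge) is correct, preserves planarity, and is arguably more elementary than the paper's route, which first establishes planar max2SAT with a cycle and then monotonizes it clause-by-clause with variable-splitting ``negator'' gadgets. For the remaining five claims, however, the crux is not the optimization core but the \emph{cycle}: one must exhibit a planar embedding of the variable--clause incidence graph \emph{together with} a cycle through all variables (or all variables and clauses). The paper sidesteps this global routing problem entirely by reducing from Dyer and Frieze's V-cycle and VC-cycle 1-in-3SAT \cite{DYER1986174}, which already come equipped with the cycle, so that only \emph{local} extensions of the inherited cycle through the new gadget clauses need to be drawn (this is what Figures~\ref{1in3}, \ref{VC}, \ref{max2min_gadget} and~\ref{max2minVC_gadget} do). Your reductions start from planar independent set and planar vertex cover, which carry no such cycle, and you defer the routing to a remark that it ``may require starting from a structured instance.'' That deferral is the theorem: an arbitrary planar incidence graph need not admit a planarity-preserving cycle through its variables, and separability additionally demands that at every variable all negative clauses lie on one side of the cycle --- for your independent-set instance this means all $G$-edges incident to $v$ must leave $v$ into the same face, which fails for a generic planar embedding.

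The min2SAT half has the same gap plus an extra layer of unverified gadgetry. Realizing a hard planar vertex-cover instance as the clause-conflict graph of a planar 2-CNF requires a vertex-expansion gadget (since a 2-literal clause can lie in at most two conflict bicliques, a degree-3 vertex cannot be a single clause); you would have to verify that the expansion preserves the vertex-cover value up to an offset, that the resulting \emph{incidence} graph (not just the conflict graph $H$) is planar, and then still solve the cycle-routing problem through all the private edge- and gadget-variables. The paper's route is far shorter: it feeds the already-established VC-cycle (resp.\ V-cycle) planar max2SAT instances into Kohli et al.'s reduction $x\lor y\mapsto(\lnot x\lor w,\,\lnot y\lor w)$ \cite{minsatOrig}, which is planarity-preserving and lets the max2SAT cycle be locally rerouted into the required min2SAT cycle. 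I would keep your independent-set argument for the monotone case, but for the V-cycle, VC-cycle and separable cases you need either to prove hardness of your source problems on cycle-equipped planar instances (which is essentially what \cite{DYER1986174} supplies for 1-in-3SAT) or to adopt the paper's strategy of inheriting the cycle from a source that already has one.
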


\section{Conclusion}We studied minimum-exposure paths in polygonal domains. We showed that minimizing seen area is hard in polygons with (large number of) holes, while in polygons with a small number of holes the \s-\t path that sees least area can be found in polynomial time. We also gave a PTAS for finding an \s-\t path minimizing the integral of the seen area along the path. Finally, we discussed the connection between the geometric secluded paths and optimizing planar satisfiability, and identified hard and easy cases of planar optSAT (while the planar optSAT variants, which we proved hard, were not used in reductions in this paper, we hope that they may be useful in other settings). We conclude with some remarks on each of the problems studied.
\paragraph*{Minimizing seen area and Secluded paths in graphs}
Recall that in \sp (the original, graph problem) the goal is to find an \s-\t path adjacent to fewest vertices of the graph (vertices of the path itself are also counted as adjacent to the path). The problem was proved hard in \cite{chechik}. Our proof of hardness of {\sc Geometric Secluded Path} (Theorem~\ref{reduction}) gives an alternative proof of hardness of \sp in graphs: simply remove equalizers and leakage-blocking chambers from Fig.~\ref{reduction} (no need to care about midway intersections and all the other geometric technicalities) and add a large number of extra vertices adjacent to each clause vertex (Fig.~\ref{graph}, left).
While our proof is simpler than the ones in \cite{chechik}, it is less powerful because Chechik et al.~\cite{chechik} showed also hardness of approximation. In fact, the reduction in \cite{chechik}, shown here on Fig.~\ref{graph}, right, may also be seen as reduction from minSAT (in view of the connection between minSAT and VC in the clause conflict graph -- see proof of Theorem~\ref{t:separable:min}): the choices that the \s-\t path makes in the edges of the original graph $G$ may be seen as setting the truth values to the variables (similarly to how the path through our Christmas tree does it).
\begin{figure}\centering\includegraphics{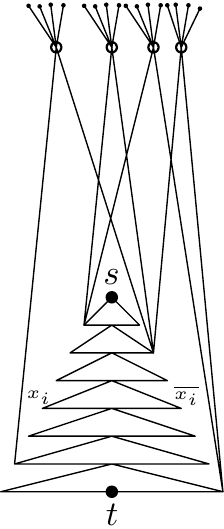}\hfil\includegraphics[height=5.3cm]{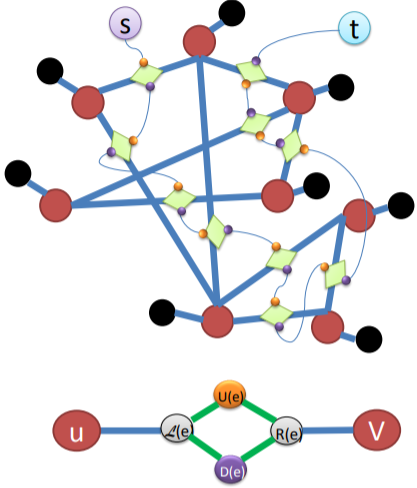}\caption{Left: Our reduction from min2SAT to \sp. To avoid high-degree vertices at the clauses (hollow), the \s-\t path will go via the Christmas tree, setting the variables; the number of seen (i.e., adjacent) clause vertices is the number of satisfied clauses. Right: The reduction from VC in a graph $G$ \cite[Fig.~3]{chechik}: the new graph $G'$ has new vertices \s and \t, and an \s-\t path (thin blue) crossing all edges (thick blue) is added to $G$, with every crossing (lightgreen rhombi) turned into a gadget (bottom) where the \s-\t path chooses which vertex of $G$ (red) the path will see; leaking into the original vertices of $G$ (red) is prevented in $G'$ by attaching high-weight vertices (black).}\label{graph}\end{figure}

A natural question, arising in view of the effort we spent dealing with the crossings in Section~\ref{smin} when proving hardness of {\sc Geometric Secluded Path} (Theorem~\ref{reduction}), is why we did not reduce from \sp in \e{planar} graphs. The answer is that we are not aware of a hardness result for the problem in planar graphs. Indeed, even though Chechik et al.'s hardness proof for \e{general} graphs (refer to Fig.~\ref{graph}, right) could reduce from VC in a \e{planar} graph $G$, in order to keep the planarity also in the resulting graph $G'$ (in which the secluded \s-\t path is sought), the added path (crossing all edges of $G$) must cross each edge exactly once, meaning that it is an Euler path in the planar dual of $G$, meaning that the dual has vertices of even degree only, meaning that $G$ has faces with even number of edges, meaning it has only even cycles, meaning it is bipartite, meaning VC is polynomial in it. (Strictly speaking, since we need only an Euler path through the edges, not Euler cycle, $G$ may have 2 odd faces -- we believe VC is still polynomial in such graphs).

\paragraph*{The PTAS for integral seen area minimization}
Several remarks on the complexity of our solution:
\begin{itemize}
\item A faster algorithm for our problem could potentially be obtained by using a ``1D'' discretization of edges of the visibility decomposition (instead of creating a 2D ``grid'' of regions, as we do), as done in many algorithms for WRP (and related problems on minimizing path integral \cite{quality1,quality2}). Such a solution, however, would require knowing the optimal path connecting points on the boundary of the same cell of the decomposition. This, may be quite complicated, as it amounts to minimizing the integral of a function with $\Omega(n)$ terms, for which an analytical solution might not exist (though an approximation may be possible).
\item  An algorithm for WRP with regions whose boundaries are curves of constant algebraic degree could be interesting and would lead to a solution of our problem just using the generic scheme from Section~\ref{generic}. The biggest stumbling block for the design of such an algorithm may be the non-convexity of the regions, implying that a segment between two points on the boundary of a region is not guaranteed to stay inside the region. It may be possible that WRP techniques could be adapted to handle our regions from Section~\ref{generic} by approximating their boundaries with piecewise-linear functions (since we are looking only for a (1+\eps)-optimal path, the fineness of such piecewise-linear approximation would also be controlled by \eps).
\item Since our problem is an extension of WRP to the case of continuously changing weight, it may be tricky to establish hardness of the problem, as the complexity of WRP has remained open for many years (see \cite{algebraic} for a recent proof of algebraic complexity of WRP). Differently from 0/1 exposure (Theorem~\ref{simple}), even in simple polygons the shortest path does not necessarily minimize the integral exposure (Fig.~\ref{counterex}).
\end{itemize}
\begin{figure}\centering
\includegraphics[width=.9\columnwidth]{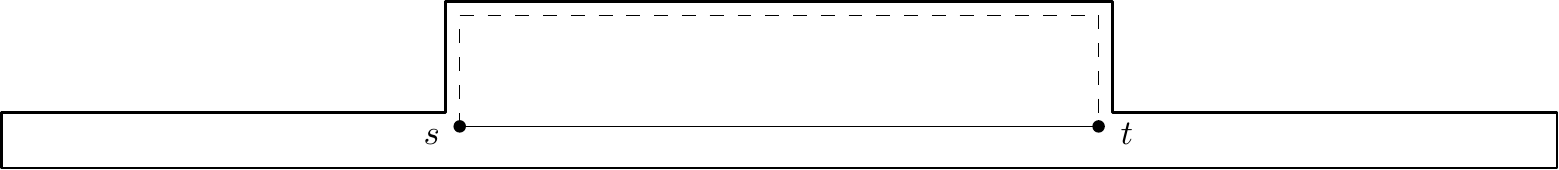}\caption{
Shortest \s-\t path (solid) sees the niches behind \s and \t for its whole length; stepping to the side (dashed path) decreases the integral exposure.}\label{counterex}\end{figure}

\paragraph*{Optimal 2-satisfiability}
Few observations on min2SAT and max2SAT:
\begin{itemize}\item Monotone minSAT is an example of the tractable class of submodular function minimization \cite{submodular}. 
\item Planar max2SAT has a PTAS \cite[Thm.~8.8]{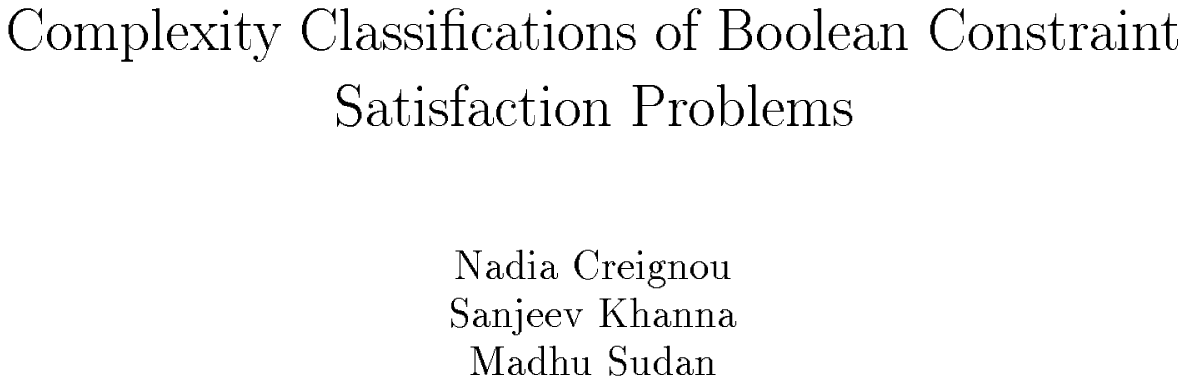}.
\item If in a separable max2SAT with VC cycle, the cycle also separates the variables at the clauses (i.e., if at each clause the connections from the two variables come from the different sides of the cycle), then the problem can be solved in polynomial time by reduction to separable min2SAT (Theorem~\ref{Kevin} in the Appendix~\ref{app:sat}).
\end{itemize}

\subparagraph*{Acknowledgements.}We thank Mike Paterson for raising the question of finding minimum-exposure paths, and the anonymous reviewers for the comments improving the presentation of the paper; we also acknowledge discussions with Irina Kostitsyna, Joe Mitchell and Topi Talvitie. Part of the work was done at the workshop on Distributed Geometric Algorithms held in the University of Bologna Centre at Bertinoro Aug 25-31, 2019. VP and LS are supported by the Swedish Transport Administration and the Swedish Research Council. 
\bibliographystyle{plainurl}\bibliography{document}\appendix

\section{Polygons with small number of holes}\begin{lemma}\label{homotopy}Let $ab$ be a segment in a polygonal domain \P, and let $\pi$ be an $a\textrm-b$ path homotopically equivalent to $ab$. If a point $\p\in\P$ is seen from $ab$, then it is also seen from $\pi$.\end{lemma}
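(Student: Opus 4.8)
The plan is to reduce the visibility claim to a single topological crossing fact. Suppose \p is seen from a point $x\in ab$, so the segment $x\p$ lies in \P; I want a point of $\pi$ that sees \p. First I would dispose of the trivial cases: if either endpoint $a$ or $b$ already sees \p, that endpoint is a point of $\pi$ seeing \p and we are done, so I may assume $a,b\notin\V\p$. Instead of working with the single segment $x\p$, I would enlarge it to the maximal chord $\chi$ of \P along the line $\ell$ through \p and $x$: extend $x\p$ in both directions until it first meets $\partial\P$, obtaining endpoints $y_0,y_1\in\partial\P$ with $\p,x$ interior to $\chi$ and $\chi\subseteq\P$. The point of this construction is that \emph{every} point $z\in\chi$ sees \p, because the subsegment $z\p$ of $\chi$ lies in $\chi\subseteq\P$. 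Hence $\chi\subseteq\V\p$, and in particular $a,b\notin\chi$ by our assumption.

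With this setup the lemma reduces to showing that $\pi$ meets $\chi$: any intersection point $z\in\pi\cap\chi$ lies on $\chi\subseteq\P$, so $z\p\subseteq\P$ and $z$ sees \p, while $z\in\pi$. To prove $\pi\cap\chi\neq\emptyset$ I would argue by mod-$2$ crossing parity. The straight segment $ab$ is not collinear with $\chi$: otherwise $a,b$ would lie on $\ell$ inside \P and, being joined to $x$ by $ab\subseteq\P$, would fall in the same component of $\ell\cap\P$ as $x$, i.e.\ onto $\chi\subseteq\V\p$, contradicting $a,b\notin\V\p$. Therefore $ab$ crosses $\chi$ transversally in the single interior point $x$, so the mod-$2$ intersection number of $ab$ with $\chi$ equals $1$. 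I would then invoke homotopy invariance of this parity: as $\pi$ is swept to $ab$ through the homotopy $H\colon[0,1]^2\to\P$ (rel the fixed endpoints $a,b$, which stay off $\chi$), the number of crossings with the fixed arc $\chi$ can change only by creating or destroying interior pairs (even changes) or by a crossing sliding off an endpoint $y_0$ or $y_1$ of $\chi$. Since $y_0,y_1\in\partial\P$ and every path in the homotopy stays in \P, no crossing can slide past $y_0$ or $y_1$, as that would force a path to go around a boundary point and hence leave \P. Thus the parity is preserved, $\pi$ has odd (in particular nonzero) intersection number with $\chi$, and the desired $z$ exists.

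The genuinely delicate step is this last topological invariance, and specifically the role of the endpoints of $\chi$ lying on $\partial\P$: this is exactly what rules out the ``slipping around a hole'' that a non-homotopic path could otherwise use to dodge the chord. I would make it precise by treating $\chi$ as a properly embedded arc and $\pi$ as an arc with fixed endpoints, and tracking crossings along the homotopy; the boundary condition $y_0,y_1\in\partial\P$ guarantees that crossings are born and annihilated only in pairs. Everything else---the reduction to the crossing, the fact that all points of $\chi$ see \p, and the transversality at $x$---is elementary. As a sanity check, the hypothesis that $\pi$ is homotopic to $ab$ is used \emph{only} in this parity argument, which is consistent with the statement failing for arbitrary paths that wind differently around the holes.
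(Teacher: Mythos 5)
Your proof is correct and follows essentially the same route as the paper's: extend the visibility segment $xp$ to a maximal chord anchored on $\partial P$, note that every point of the chord sees $p$, and show that any path homotopic to $ab$ rel endpoints must cross that chord. The paper formalizes the last step by arguing that if $\pi$ missed the chord, the loop $ab\cup\pi$ would enclose one of the boundary components holding the chord's endpoints; your mod-2 intersection-parity argument is an equivalent and slightly more robust phrasing of the same fact, since it also dispenses with the paper's preliminary reduction to the case where $\pi$ meets $ab$ only at $a$ and $b$.
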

\begin{proof}Assume that $\pi$ does not intersect $ab$ other than at $a,b$ -- this assumption is w.t.o.g.\ since we may separately consider each subpath of $\pi$ between consecutive intersection points with $ab$. Let $c\in ab$ be a point which sees \p (Fig.~\ref{fig:homot}). Extend the segment $cp$ maximally in both directions, until it hits the holes $H_1,H_2$ (it may be that $H_!=H_2$ and that any of $H_1,H_2$ is the outer polygon of \P); let $\bar{cp}$ be the extended segment. If $\pi$ does not intersect $\bar{cp}$, then at least one of the holes $H_1,H_2$ is inside the closed loop formed by $ab$ and $\pi$, implying that $\pi$ is not homotopically equivalent to $ab$ -- a contradiction. Thus, $\pi$ intersects $\bar{cp}$ and sees \p at the point of intersection.\end{proof}
\begin{figure}\centering\includegraphics{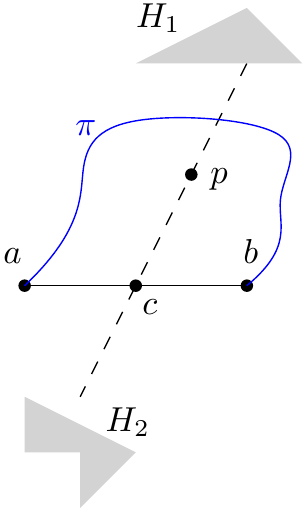}\caption{Both $H_1$ and $H_2$ must be outside the loop formed by $\pi$ and $ab$.}\label{fig:homot}\end{figure}
It follows that in a polygon with small number of holes, the secluded path may be found by scrolling through all homotopy types of simple \s-\t paths (e.g., by guessing the order in which the locally shortest path touches the holes \cite[Section~4]{socg07}), finding the shortest path of each homotopy type (e.g., using \cite{homot}) and choosing the locally shortest path that sees least (clearly, the area seen by a given path can be calculated in polynomial time).

\section{Formalities omitted from Section~\ref{implementation}}\label{app:ptas}

For a point $p\in P$, split the areas \d from (\ref{C}) into large (larger than $A_1$) and small (the others). By construction, the area \d of every large triangle $r'Rr$ is (1+\eps)-approximated by the weight $A_i$ of the sector $S_i$ $(i>1)$ to which \p belongs
\begin{equation}\label{large}\d\le A_i\le(1+\eps)\d\end{equation}while for small \d
\begin{equation}\label{small}0\le\d\le A_1\end{equation}
In particular, since there are at most $n$ small triangles, their total area is at most $nA_1=\eps/2\le\eps|\V\p|$ (cf.~(\ref{pick})); thus replacing \d with $A_1$ for each small triangle changes |\V\p| by at most an additive \eps|\V\p|.

Formally, let $\oplus_1\subseteq\oplus$ (resp.\ $\ominus_1\subseteq\ominus$) denote the vertices in $\oplus$ (resp.\ $\ominus$) for which \d is small. We can expand (\ref{C}) into \begin{equation}\label{expand}|\V\p|=C+\sum_{r'\in\oplus\setminus\oplus_1}\d+\sum_{r'\in\oplus_1}\d-\sum_{r'\in\ominus\setminus\ominus_1}\d-\sum_{r'\in\ominus_1}\d\end{equation}
The weight $w(p)$ is obtained by replacing \d in each summand with the area $A_i$ of the sector $S_i\ni p$:
\begin{equation}\label{ew}w(p)=C+\sum_{\oplus\setminus\oplus_1}A_i+\sum_{\oplus_1}A_1-\sum_{\ominus\setminus\ominus_1}A_i-\sum_{\ominus_1}A_1\end{equation} Using~(\ref{large}) and (\ref{small}), we get
\begin{equation}\label{expand-}w(p)\le C+(1+\eps)\sum_{r'\in\oplus\setminus\oplus_1}\d+\eps|\V\p|-\sum_{r'\in\ominus\setminus\ominus_1}\d-\sum_{r'\in\ominus_1}\d\end{equation}
Observe that even if the triangles from $\oplus$ are removed from \V\p, the point \p still sees a non-negative area, i.e., \[C-\sum_{r'\in\ominus\setminus\ominus_1}\d-\sum_{r'\in\ominus_1}\d\ge0\] which means that \[C-\sum_{r'\in\ominus\setminus\ominus_1}\d-\sum_{r'\in\ominus_1}\d\qquad\le\qquad(1+\eps)\left(C-\sum_{r'\in\ominus\setminus\ominus_1}\d-\sum_{r'\in\ominus_1}\d\right)\]
Substituting this into~(\ref{expand-}) and comparing with~(\ref{expand}), we get
\[w(p)\le(1+2\eps)|\V\p|\]

On the other hand, from (\ref{large})-(\ref{ew}),
\[w(p)\ge C+\sum_{\oplus\setminus\oplus_1}\d+\sum_{\oplus_1}\d-(1+\eps)\sum_{\ominus\setminus\ominus_1}\d-\sum_{\ominus_1}\frac{\eps|\V\p|}n\ge\]
\[\ge C+\sum_{\oplus\setminus\oplus_1}\d+\sum_{\oplus_1}\d-\sum_{\ominus\setminus\ominus_1}\d-\eps\sum_{\ominus\setminus\ominus_1}\d-\eps|\V\p|\ge\]
\[\ge C+\sum_{\oplus\setminus\oplus_1}\d+\sum_{\oplus_1}\d-\sum_{\ominus\setminus\ominus_1}\d-\sum_{\ominus_1}\d-\eps\sum_{\ominus\setminus\ominus_1}\d-\eps|\V\p|\ge\]
\[\ge|\V\p|-\eps|\V\p|-\eps|\V\p|=(1-2\eps)|\V\p|\]
Similarly to (\ref{eint}), the above proves the approximation ratio:
\[\int_{\pi}w(p)\,\mathrm{d}p\quad\le\quad\frac{(1+2\eps)(1+\eps)}{1-2\eps}\int_{\pi^*}|\V\p|\,\mathrm{d}p\]
where $\pi$ is the (1+\eps)-approximate path through our weighted regions and $\pi^*$ is the path with minimum integral exposure.

Overall, in comparison with Section~\ref{generic}, the regions in our WRP now have straightline boundaries, so a standard WRP algorithm can be applied. Also, the number of regions is decreased, because a sequence $\cal A$ of the level sets is defined by a \e{pair} $r',\bar{r}$ where $r'$ is a vertex of \P and $\bar r$ is the side of \P containing $r$; thus overall there are $O(\frac{n^2}\eps\log(nL))$ level sets. The level sets are overlaid with the $O(n^2)$ extensions of the visibility graph edges, defining the visibility decomposition (we let the level sets straddle through the cells of the visibility decomposition since the level sets are the same irrespective of the cell; the only term in formula~(\ref{C}) for |\V\p| that changes from cell to cell is $C$). Thus overall there are $O(\frac{n^4}{\eps}\log(nL))$ regions and similarly to Proposition~\ref{prop} we have:
\tptas*

\section{Mathematica listing}\label{listing}\includepdf[offset=150 -150]{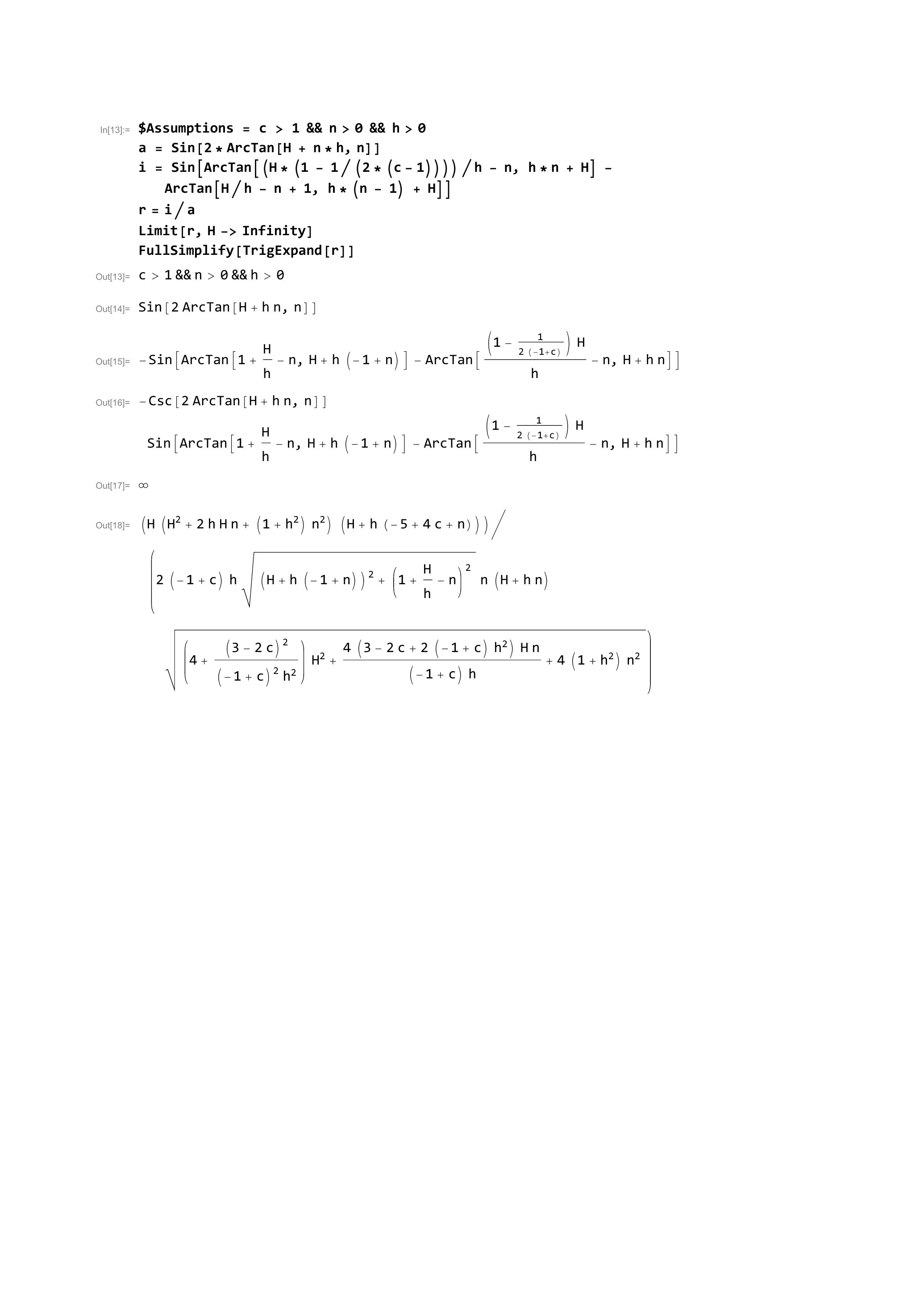}

\section{Hard planar versions of min2SAT and max2SAT}\label{app:sat}We prove Theorem~\ref{t:opt2sat}, restated also here:\toptsat*
\begin{proof}Planar min2SAT (without V- or VC-cycle) was proved hard by Guibas et al.\ in \cite[Theorem~3.2]{joe} using a reduction from planar 3SAT. We did not see how to add the cycles on top of Guibas et al.'s gadgets, and therefore present a different reduction. Some clauses in our 2SAT instances will have 1 literal (not 2) -- we do not differentiate between versions of 2SAT with \e{exactly} 2 literals per clause and \e{at most} 2 literals per clause (refer to \cite{erik,tippenhauer} for discussion of the differences between the versions, definitions, hardness proofs and uses of the many versions of satisfiability). We also do not show 1-literal clauses on pictures of our gadgets below, since it is straightforward to add them and extend the cycles to run through them.

To prove hardness of max2SAT with V-cycle, we reduce from V-cycle 1-in-3SAT (find truth assignment to satisfy \e{exactly} one literal in each clause) shown hard by Dyer and Frieze \cite{DYER1986174}.\footnote{Interestingly, Dyer and Frieze reported that they did not need the cycles for their purposes, but a referee insisted that others might need it later.} We replace every clause $a\lor b\lor c$ with 9 clauses: $\lnot a, \, \lnot b, \, \lnot c, \, \lnot a\lor\lnot b, \, \lnot b\lor \lnot c, \, \lnot a\lor\lnot c, \, a\lor b, \, b\lor c, \,a\lor c$. If none, 2 or 3 of $a,b,c$ are true, then 6 new clauses are satisfied, if 1 is true -- 7 are satisfied, and if all of $a,b,c$ are true -- 3 new clauses are satisfied. That is, $7|C|$ clauses are satisfied in the max2SAT instance iff all $|C|$ clauses are satisfied in the 1-in-3SAT. Figure~\ref{1in3}, left shows that if the 1-in-3SAT instance is planar, then the max2SAT instance is also planar; since the two instances have the same variables, the V-cycle in 1-in-3SAT is inherited by the max2SAT.\begin{figure}\centering\includegraphics[width=.3\columnwidth,page=2]{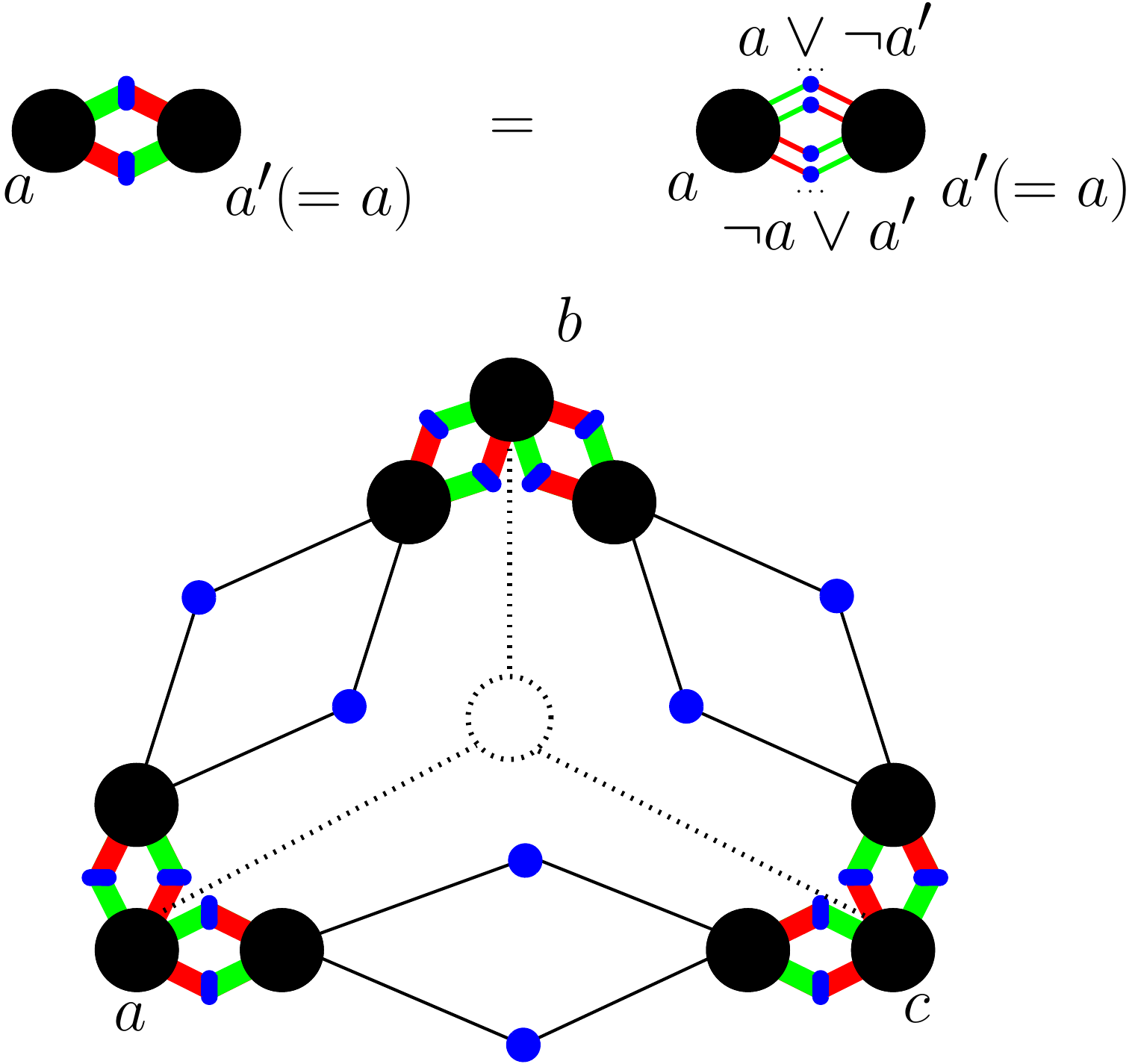}\hfil\includegraphics[width=.4\columnwidth,page=1]{1in3}\caption{Left: Variables are large black disks and the six 2-literal clauses of max2SAT are blue circles; the 1-in-3SAT clause and connection to it are dotted. The V-cycle is blue. Right: Each thick red-green rhombus is the ``truth propagator'', consisting of the $2N$ clauses. Green are positive connections, red are negative.}\label{1in3}\end{figure}

To prove hardness of max2SAT with VC-cycle, we reduce from VC-cycle 1-in-3SAT, also shown hard in \cite{DYER1986174}. We start from the gadget we used to prove hardness of V-cycle max2SAT (Fig.~\ref{1in3}, left) and extend it by adding 2 copies of each of $a,b,c$ (Fig.~\ref{1in3}, right). The copy $a'$ of $a$ forms $N$ clauses $a\lor \lnot a'$ and $N$ clauses $\lnot a\lor a'$; the number $N$ is chosen so large that irrespectively of the rest of the instance, the optimal truth assignment would rather set $a=a'$ (which will satisfy all $2N$ clauses) than set $a=\lnot a'$ (which will satisfy only $N$). The same is done with the remaining 5 copies (the other copy of $a$, 2 copies of $b$ and 2 copies of $c$). Now, all $|C|$ clauses can be satisfied in 1-in-3SAT iff $7|C|+12N|C|$ clauses can be satisfied in the max2SAT. Finally, note that the VC-cycle in the 1-in-3SAT instance can go through the clause either as in Fig.~\ref{VC}, top left (not separating the variables) or as in Fig.~\ref{VC}, bottom left (separating the variables); Fig.~\ref{VC}, right shows how to run the cycle through the new variables and clauses of max2SAT in both cases.\begin{figure}\centering\includegraphics[width=.9\columnwidth,page=1]{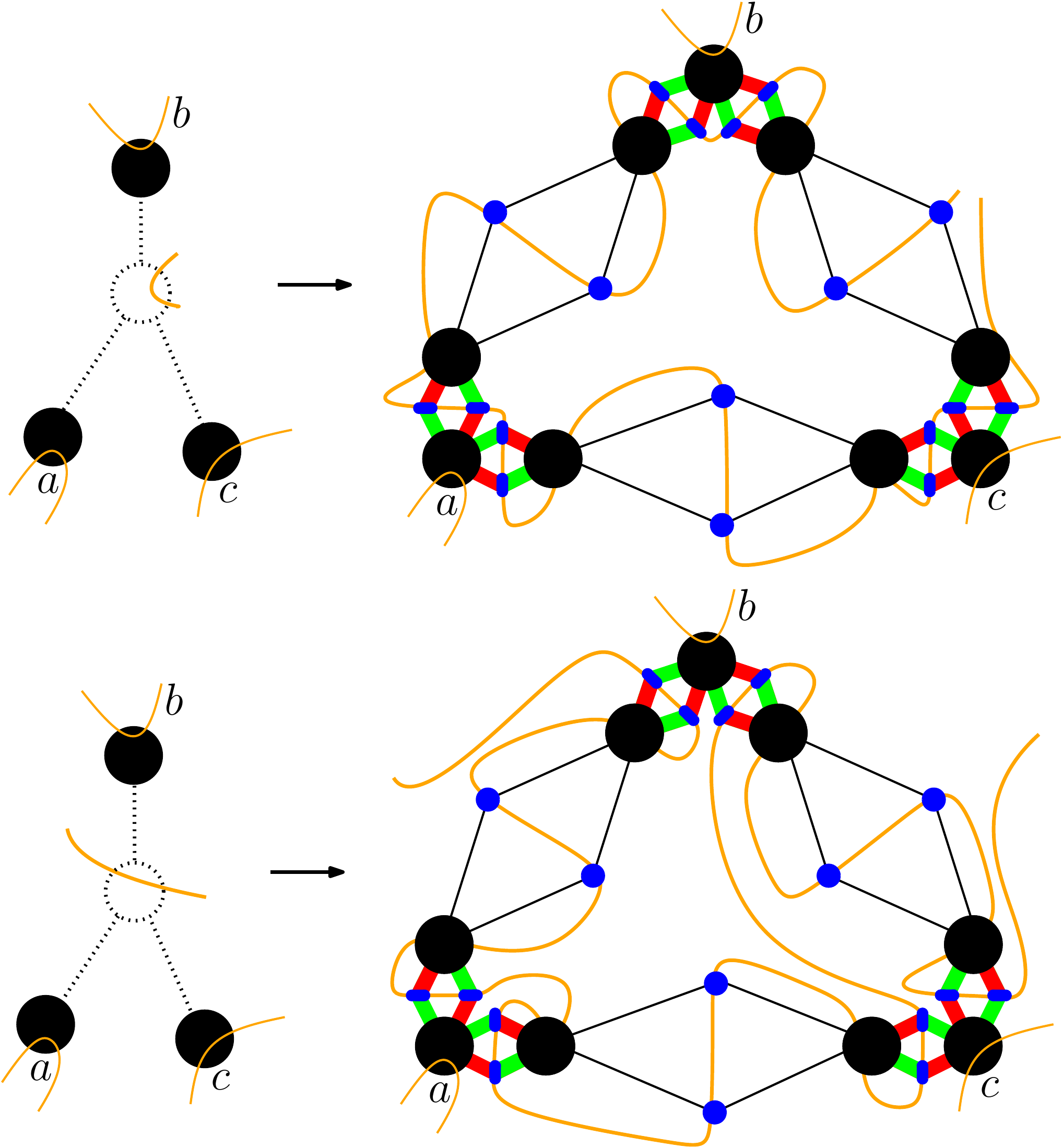}\caption{The VC-cycles are orange. Left: 1-in-3SAT. Right: max2SAT.}\label{VC}\end{figure}

Next, we prove hardness of planar monotone max2SAT by reduction from planar max2SAT, resolving non-monotone clauses one-by-one as shown in Figure~\ref{monotonicity_gadget}: in each non-monotone clause $x\lor\lnot y$ we select one of the variables, say $y$, and split it into three variables $y, z, t$ so that $y = \lnot z = t$ (the same trick as above, of introducing a large number $N$ of ``parallel'' monochromatic clauses $a\lor b$ and $\lnot a\lor\lnot b$ as the negator, is used to enforce $y=\lnot z=t$ in the optimal truth assignment), and replace $x\lor\lnot y$ with $x\lor z$ ($t$ is used to pick up all the clauses on the other side of the fixed, monotonized connection).\begin{figure}\centering\includegraphics[width=0.9\textwidth,page=3]{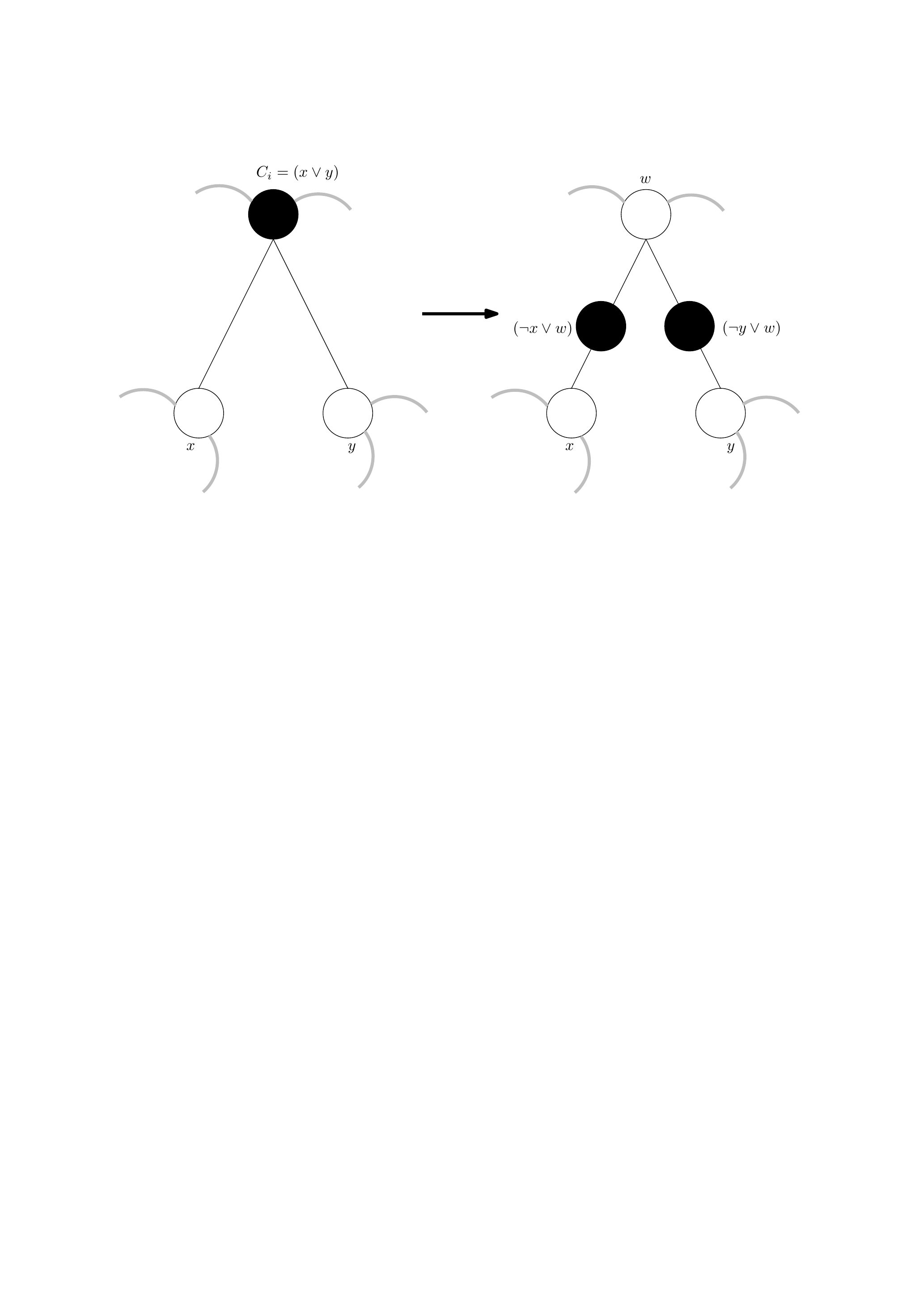}\caption{The monotonicity gadget. We make the clauses monochromatic one by one. As in the truth propagator in Fig.~\ref{1in3}, every thick edge in the negator is a large number of identical connections.}\label{monotonicity_gadget}\end{figure}
We prove hardness of separable max2SAT by reduction from planar max2SAT with V-cycle, using the same idea as in proving hardness of monotone planar min2SAT: for any variable $y$ that has both positive and negative connections on both sides of the V-cycle, we select one side to be ``positive'' and the other to be ``negative'', and if there are negative connections on the positive side, we split $y$ into three variables $y = \lnot z = t$ and fix the connection as shown in Figure~\ref{separability_gadget}.\begin{figure}\centering\includegraphics[width=0.9\textwidth,page=4]{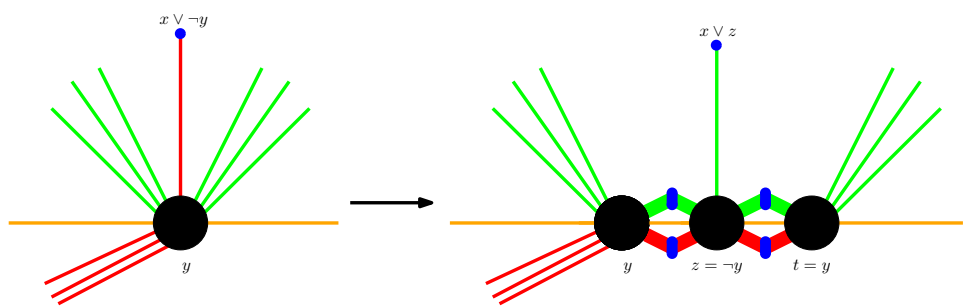}\caption{The separability gadget; V-cycle is orange.}\label{separability_gadget}\end{figure}

Finally, we show hardness of V- and VC-cycle min2SAT.
The hardness of (non-planar) min2SAT was originally shown by Kohli et al.~\cite{minsatOrig} by the following reduction from (non-planar) max2SAT: replace every clause $x\lor y$ with two clauses $\lnot x\lor w,\lnot y\lor w$, where $w$ is a new variable; as is easy to see, in the new instance it is possible to satisfy at most $2|C|-k$ clauses iff it was possible to satisfy at least $k$ clauses in the original instance ($|C|$ stands for the number of clauses in the original max2SAT instance). The reduction is easy to do in a planarity-preserving way; moreover, if we start from VC-cycle max2SAT, its VC-cycle becomes the V-cycle in min2SAT (Fig.~\ref{max2min_gadget}) -- this proves hardness of V-cycle min2SAT. To show hardness of VC-cycle min2SAT, we use the same reduction from V-cycle max2SAT, but look more closely at how the cycle goes through the clause; for each of the 3 different ways (the cycle is on one side of the clause, the other side, or crosses through the clause), we extend the cycle in the min2SAT so that it grabs also the new clauses (Fig.~\ref{max2minVC_gadget}).
\begin{figure}\centering\includegraphics[width=0.9\textwidth,page=2]{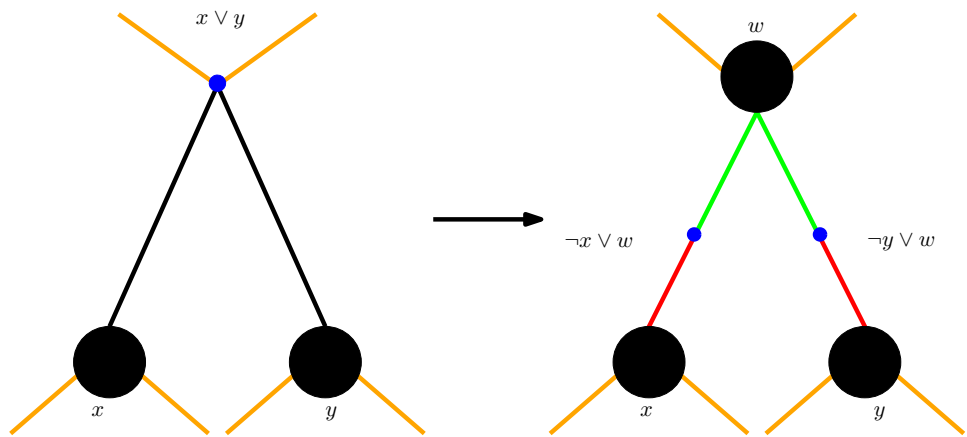}\caption{The planarity gadget. Left: orange is the VC-cycle in max2SAT. Right: Orange is the V-cycle in min2SAT.}\label{max2min_gadget}\end{figure}
\begin{figure}\centering\includegraphics[width=0.9\textwidth,page=1]{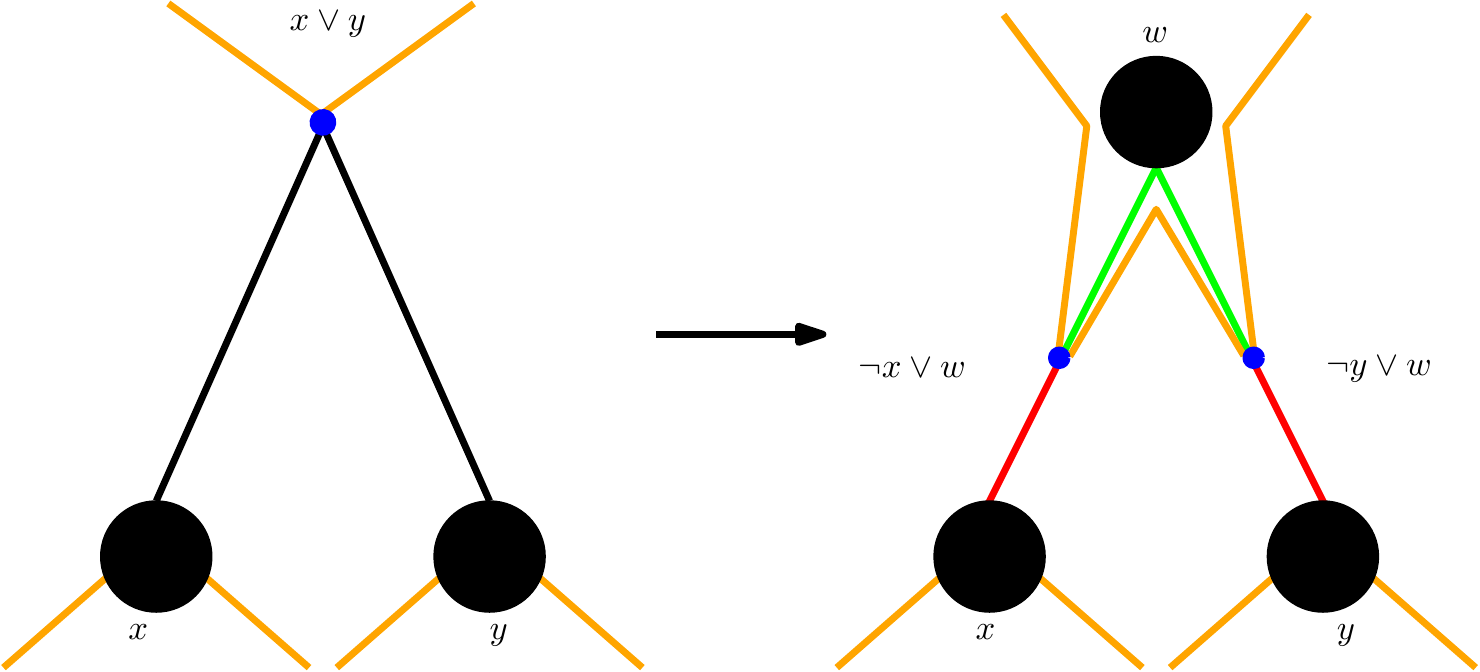}\\
\includegraphics[width=0.9\textwidth,page=2]{max2sat_to_min2satVC}\\
\vspace{20pt}
\includegraphics[width=0.9\textwidth,page=3]{max2sat_to_min2satVC}
\caption{The planarity gadget. Left: orange is the VC-cycle in max2SAT. Right: Orange is the VC-cycle in min2SAT. Top to bottom: the 3 ways of how the cycle goes through $x\vee y$ in VC-cycle max2SAT.}\label{max2minVC_gadget}\end{figure}
\end{proof}

A separable max2SAT with VC cycle which separates also the variables at the clauses can be solved in polynomial time:
\begin{theorem}\label{Kevin}Separable max2SAT with VC cycle, in which the VC cycle also separates the variables at the clauses (i.e., at each clause the connections from the two variables come from the different sides of the cycle), is polynomial-time solvable.\end{theorem}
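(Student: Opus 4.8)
The plan is to reduce the problem to separable min2SAT, which is polynomial-time solvable by Theorem~\ref{t:separable:min}, and then argue that the hypotheses guarantee the reduced instance is genuinely separable. I would reuse the max2SAT-to-min2SAT reduction employed in the proof of Theorem~\ref{t:opt2sat}: replace every clause $\ell_x\lor\ell_y$ by the two clauses $\lnot\ell_x\lor w_c$ and $\lnot\ell_y\lor w_c$, where $w_c$ is a fresh variable created for the clause $c$ (a $1$-literal clause $\ell_x$ becomes the single clause $\lnot\ell_x\lor w_c$). As recalled there, an assignment satisfies at least $k$ clauses of the max2SAT instance iff the induced assignment satisfies at most $2|C|-k$ clauses of the min2SAT instance; hence an exact min2SAT solver yields an exact max2SAT solver, and it remains only to show that the min2SAT instance we produce is separable.

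First I would check separability variable by variable. Each new variable $w_c$ occurs only positively (in the at most two clauses derived from $c$), so it has no $\lnot w_c$-clause at all; by the very definition of separability such a variable can never be violated, and its clauses may lie on either side of the cycle. For an original variable $x$, every occurrence of a literal $\ell_x$ of $x$ is turned into the opposite literal $\lnot\ell_x$, while the clause carrying it stays geometrically where the original clause $c$ was. Thus the min2SAT clauses containing $x$ are exactly the images of the max2SAT clauses containing $\lnot x$, and vice versa. Since the input was separable, these two families already lay on opposite sides, so the min2SAT families are separated as well (the two sides merely swap their labels, which is immaterial).

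The remaining task is to exhibit an actual V-cycle for the min2SAT instance that certifies this separation, and this is where the VC-cycle hypotheses enter. I would obtain it from the given VC-cycle by promoting each clause vertex $c$ on the cycle to its new variable $w_c$: the resulting closed curve now visits exactly $V\cup\{w_c\}$, i.e.\ all variables of the min2SAT instance, and it is planar because it is the same curve as before. The two new clauses derived from $c$ become degree-$2$ vertices that subdivide the two former edges from $c$ to $x$ and to $y$; by the assumption that the VC-cycle \emph{separates the variables at each clause}, these two edges leave $c$ on opposite sides of the cycle, so the two new clauses can be placed on the two distinct sides without introducing crossings, and each inherits the side on which its original edge lay. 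Feeding the resulting separable min2SAT instance into Theorem~\ref{t:separable:min} (reduction to minimum vertex cover in the bipartite clause-conflict graph) solves it, and hence the original max2SAT, in polynomial time.

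The step I expect to be the main obstacle is the last one: verifying that the side on which each new clause lands relative to the promoted V-cycle really matches the sign of its flipped literal, so that the per-variable separation established in the second step is witnessed by a single global cycle. This forces one to reconcile two a priori different notions of ``side'' — separation relative to the variable-only V-cycle of the separable structure and separation relative to the VC-cycle — and it is precisely the hypothesis that the VC-cycle separates the two variables at every clause that glues them together, since it is what prevents the two edges emanating from a promoted clause from collapsing onto the same side and thereby destroying the inherited separation.
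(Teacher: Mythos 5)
Your overall route is the same as the paper's: apply the max2SAT-to-min2SAT reduction of \cite{minsatOrig}, observe that the VC-cycle of the max2SAT instance becomes a V-cycle of the min2SAT instance (clause vertices are promoted to the new variables $w_c$, and the new clauses subdivide the old variable--clause edges), verify separability, and invoke Theorem~\ref{t:separable:min}. But there is a genuine gap in the middle, and you flagged its symptom yourself: in your write-up the hypothesis that the VC-cycle separates the two variable connections at each clause never does any real work, because with both new clauses containing $w_c$ positively, separability at $w_c$ is vacuous. The root cause is that the reduction as you transcribed it is not the correct one. Replacing $\ell_x\lor\ell_y$ by $\lnot\ell_x\lor w_c$ and $\lnot\ell_y\lor w_c$ does not give the claimed correspondence: if both $\ell_x$ and $\ell_y$ are true, setting $w_c$ false leaves \emph{both} new clauses unsatisfied, so a satisfied original clause can contribute $0$ rather than $1$ to the minimum; an assignment with one doubly-true and one falsified clause then ties, in the min2SAT objective, with an assignment having two singly-true clauses, although their max2SAT values differ, so an exact min2SAT solver need not return a max2SAT optimum. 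The correct reduction is $\lnot\ell_x\lor w_c$ and $\lnot\ell_y\lor\lnot w_c$, under which every satisfied clause contributes exactly $1$ and every unsatisfied clause exactly $2$, giving the $2|C|-k$ correspondence.

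Once the reduction is fixed, the hypothesis clicks into place exactly where your last paragraph currently waves its hands: $w_c$ now occurs positively in the clause sitting on the former edge to $\ell_x$ and negatively in the clause sitting on the former edge to $\ell_y$, so separability at $w_c$ requires these two new clause vertices to lie on opposite sides of the cycle --- and that is precisely what ``the VC-cycle separates the variables at each clause'' guarantees. Your treatment of the original variables (each occurrence is negated, so the two clause families swap sides but remain separated) and of planarity (the new clauses merely subdivide old edges of the plane drawing) is correct and agrees with the paper. For fairness: the displayed formula in the paper's proof of Theorem~\ref{t:opt2sat} also writes $\lnot y\lor w$, but the paper's proof of the present theorem (``it separates the new literals in $I'$ because it separated the connections to clauses in $I$'') only makes sense for the opposite-polarity version, so you should read the displayed formula as a typo rather than as the intended reduction.
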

\begin{proof}Let $I$ be an instance of max2SAT as in the statement of the theorem. The standard reduction from max2SAT to min2SAT \cite{minsatOrig} (see the proof of Theorem~\ref{t:opt2sat} above) preserves the planarity. Moreover, the obtained min2SAT instance $I$ is separable: the cycle separates the original literals from $I$ because $I$ is separable, and it separates the new literals in $I'$ because it separated the connections to clauses in $I$. By Theorem~\ref{t:separable:min} the minimum number of satisfied clauses in $I'$, and hence the maximum number of satisfied clauses in $I$ can be found in polynomial time.\end{proof}

\end{document}